\newtheorem{theorem}{Theorem}
\newtheorem*{theorem*}{Theorem}
\newtheorem{corollary}[theorem]{Corollary}
\newtheorem{proposition}[theorem]{Proposition}
\begin{document}
\title{Quantum Dynamical Entropy, Chaotic Unitaries and Complex Hadamard Matrices}
\author{Wojciech S\l omczy\'{n}ski, Anna Szczepanek
\thanks{W. S\l omczy\'{n}ski and A. Szczepanek 
are with Institute of Mathematics, Jagiellonian University,
\L ojasiewicza 6, 30-348 Krak\'{o}w, Poland.
Email: wojciech.slomczynski@im.uj.edu.pl,
anna.szczepanek@im.uj.edu.pl}%
\thanks{Manuscript received December 13, 2016; revised September 8, 2017.}}%
\markboth{IEEE Transactions on Information Theory,~Vol.~X, No.~X, XXXX~2017}
{}
\maketitle

\begin{abstract}
We introduce two information-theoretical invariants for the projective unitary
group acting on a finite-dimensional complex Hilbert space: PVM- and
POVM-dynamical (quantum) entropies, which are analogues of the classical
Kolmogorov-Sinai entropy rate. They quantify the maximal randomness of the
successive quantum measurement results in the case where the evolution of the
system between each two consecutive measurements is described by a given
unitary operator. We study the class of chaotic unitaries, i.e., the ones of
maximal entropy, or equivalently, such that they can be represented by
suitably rescaled complex Hadamard matrices in some orthonormal bases. We
provide necessary conditions for a unitary operator to be chaotic, which
become also sufficient for qubits and qutrits. These conditions are expressed
in terms of the relation between the trace and the determinant of the
operator. We also compute the volume of the set of chaotic unitaries in
dimensions two and three, and the average PVM-dynamical entropy over the
unitary group in dimension two. We prove that this mean value behaves as the
logarithm of the dimension of the Hilbert space, which implies that the
probability that the dynamical entropy of a unitary is almost as large as
possible approaches unity as the dimension tends to infinity.
\end{abstract}

\begin{IEEEkeywords}
Quantum mechanics, entropy, measurement uncertainty.
\end{IEEEkeywords}

\section{Introduction}

\IEEEPARstart{I}{magine} we are standing somewhere on the Earth's (spherical)
surface that rotates around the north-south axis. Try to choose this place in
such a way as to make as large as possible the angle between the axis passing
through the chosen point and the centre of the Earth, and the rotated axis
determined after some fixed time interval. If the time period is less than six
hours, the choice is simple: we must locate ourselves somewhere on the
equator. However, if the elapsed time is chosen between six and twelve hours,
the situation becomes more complicated. We have to travel north (or south) the
equator, eventually reaching, for twelve hours, the 45th parallel north (say
on the border between Montana and Wyoming) or the 45th parallel south (e.g.,
in Becks, a small settlement on the South Island of New Zealand). In the
former `short time' case, the maximal attainable angle is equal to the earth's
angle of rotation, but in the latter, i.e., when the time is long enough, we
can always find a point on the Earth (or, more precisely, a circle of
latitude) such that the angle between the two lines in question is right. Now,
if we swap the Earth for the Bloch sphere representing qubits, this simple
riddle illustrates the difference between two kinds of unitary transformations
(represented here as Bloch sphere rotations): non-chaotic and chaotic ones.
The exploration of this difference is the main theme of this paper.

The invariant that we shall use to distinguish chaotic unitaries is quantum
dynamical entropy. The notion of (classical) \textit{dynamical entropy} (or
\textit{entropy rate}) is due to Claude E.~Shannon \cite{Sha48,ShaWea49}, who
introduced it into information theory, and Andrei Kolmogorov \cite{Kol58}, who
made it a basic tool for studying dynamical systems. In his seminal paper
Shannon discussed the problem of computing entropy for a discrete and ergodic
information source sending messages to a receiver. This quantity can be
determined from the statistics of finite message sequences, namely, it is the
limit of entropy of a block of symbols divided by its length, or the limit of
conditional entropy of the next symbol given the preceding ones, as the block
length tends to infinity. In the Kolmogorov-Sinai theory the definition of
entropy is very similar to Shannon's, except that instead of message
sequences, the results of discrete measurements (represented there by finite
partitions of the phase space) are analysed and then the supremum over all
such measurements is taken. The entropy defined in this way is invariant with
respect to metric isomorphisms of dynamical systems. Moreover, using the
Kolmogorov-Sinai (KS) entropy, we can formally distinguish \textit{regular}
systems (with dynamical entropy equal to zero) from \textit{chaotic systems}
(with strictly positive dynamical entropy).

In the present paper we consider a quantum analogue of this notion. We analyse
the situation where successive measurements are performed on a
finite-dimensional quantum mechanical system whose evolution between two
subsequent measurements is given by a quantum operation. We assume that the
dynamics of the quantum system is described by a finite-dimensional unitary
operator, and the measurement process either by a \textit{von Neumann-L\"{u}%
ders\ instrument} (represented by a projection valued measure - PVM) or by a
\textit{generalised L\"{u}ders instruments,} disturbing the initial state in
the minimal way (represented by a positive operator valued measure - POVM). If
the measure consists of rank-1 operators, then such process generates two
Markov chains: the first one in the space of states (so-called
\textit{discrete quantum trajectories}, see, e.g.,
\cite{Kum06,MaaKum06,Lim10,AttPel11,Benetal17}), and the second one in the space of
measurement outcomes. The dynamical entropy (entropy rate) of the latter can
be used to estimate the randomness of the measurement results.

Understood in this way, quantum dynamical entropy was introduced independently
by Srinivas \cite{Sri78}, Pechukas \cite{Pec82}, Beck and Graudenz
\cite{BecGra92}, cf. \cite[Sec. 4.2]{AccOhyWat97} and \cite{KolKon14}.
The idea has been recently rediscovered and
analysed by Crutchfield and Wiesner under the name of \textit{quantum entropy
rate} \cite{CruWie08,Wie10}. They have provided also a detailed information-theoretic
interpretation of this notion, as well as of two related notions: \textit{excess entropy}
and \textit{transient information}, see also \cite{CruFel03}. The entropy rate says
how predictable the measurement results are, the excess entropy - how
hard it is to do the predicting, and the transient information - how difficult it is to
know the internal state of such a quantum process through measurements.
The notion of entropy rate is also closely related to the \textit{entropy of unitary matrices,}
used in different contexts by various authors \cite{Knoetal99,Zycetal03,Ail16}.

Imitating the definition of the
Kolmogorov-Sinai entropy \cite[p. 64]{Gre11} and taking the supremum over the class of PVM
measurements, we get the \textit{PVM-dynamical entropy}, which depends only on
the quantum dynamics and characterizes its ability to produce random sequences
of measurement outcomes. In the case of POVMs the situation is more
complicated as there are two independent sources of randomness that can
influence the value of dynamical entropy. The first is the underlying dynamics
of the system, described by a unitary operator. The second is the POVM
measurement, which potentially introduces some additional randomness.
Subtracting the dynamical entropy calculated for trivial (identity) dynamics
from the original entropy rate, and then taking the supremum over the class of
POVM measurements, we get another quantity, the \textit{POVM-dynamical
entropy}, which again depends only on the unitary operator and is larger than
or equal to its PVM counterpart. These measurement independent definitions of
dynamical quantum entropy for finite-dimensional systems were introduced in a
more general setting in \cite{SloZyc94,Kwaetal97,SloZyc98} and then developed
further in \cite{Slo03,SloSzy16}. However, only preliminary results have been
obtained so far. In the present paper we study the notion of PVM-dynamical
entropy in full details, postponing more comprehensive analysis of the
POVM-dynamical entropy to further publications. The PVM-dynamical entropy
quantifies the maximal rate at which classical randomness can be produced by
a given unitary dynamics in a repeated von Neumann-L\"{u}ders measurement process.
In this sense, it is a natural counterpart of the classical KS entropy
modelled on the notion of entropy of an information channel.

Note that a widely accepted generalization of the KS entropy for quantum mechanics
has not yet been found, in spite of the fact that several attempts to define
such a quantity have been made \cite{OhyPet93,Cap05}. In particular, the
best-known quantum dynamical entropies, such as the Connes-Narnhofer-Thirring
(CNT) entropy \cite{Conetal87} or the Alicki-Fannes (AF) \cite{AliFan01}
entropy, vanish for finite-dimensional quantum systems \cite{Benetal98},
\cite[Sec. 14.5]{Ben03}, and so they cannot be used to quantify the randomness
of the successive measurement outcomes in the case we study here.

In Sec.~\ref{BASIC} we introduce the notions of PVM- and POVM-dynamical
entropy and observe that they are invariant under conjugation, inversion and
phase multiplication, which makes them class functions for the projective
unitary-antiunitary group. These quantities are non-negative and bounded from
above by the logarithm of the dimension of the underlying Hilbert space, also
called the number of degrees of freedom of the quantum mechanical system. We
show that their mean values averaged over the unitary ensemble are only
slightly smaller than the upper bound and so tend logarithmically to infinity.
In Sec.~\ref{CHAOS} we use these dynamical entropies to distinguish between
\textit{chaotic}, i.e., the ones of maximal entropy, and non-chaotic
unitaries. The former are characterized as those that can be represented by a
suitably rescaled complex Hadamard matrix in some orthonormal basis. In
Sec.~\ref{QUBITS} we compute the volume of the set of chaotic matrices as well
as the exact value of mean PVM-dynamical entropy in dimension two.
Sec.~\ref{QUTRITS} contains a necessary condition for a unitary matrix to be
chaotic. We show that for qubits and qutrits this condition is, in fact,
sufficient. This allows us to compute the volume of the set of chaotic
matrices also in dimension three. In Sec.~\ref{POVM} we discuss the
difficulties that arise when trying to extend the definition of quantum
dynamical entropy to the realm of general measurements.

\section{Quantum dynamical entropy - definition and basic
properties\label{BASIC}}

We assume that the pure states of a $d$-dimensional quantum system are
represented by the complex projective space $\mathbb{CP}^{d-1}$ or,
equivalently, by the set $\mathcal{P}\left(  \mathbb{C}^{d}\right)  $ of
one-dimensional projections in $\mathbb{C}^{d}$. The set of all quantum states
$\mathcal{S}\left(  \mathbb{C}^{d}\right)  $ is the convex closure of
$\mathcal{P}\left(  \mathbb{C}^{d}\right)  $, i.e., the set of density
(Hermitian, positive semi-definite, and trace one) operators on $\mathbb{C}%
^{d}$.

The \textsl{measurement }(with $k$ possible outcomes) of this system is given
by a \textsl{positive operator valued measure} (\textsl{POVM}), i.e., an
ensemble of positive (non-zero) Hermitian operators $\Pi_{j}$ ($j=1,\ldots,k$)
on $\mathbb{C}^{d}$ that sum to the identity operator, i.e., $\sum
\nolimits_{j=1}^{k}\Pi_{j}=\mathbb{I}$. In this paper we shall consider
only \textsl{normalized rank-}$1$\textsl{\ POVMs}, where $\Pi_{j}$
($j=1,\ldots,k$) are \mbox{rank-$1$} operators and $\operatorname{tr}\left(  \Pi
_{j}\right)  =\operatorname{const}(j)=d/k$, but we shall discuss shortly
the general case in the last section. Necessarily, $k\geq d$
and there exists an ensemble of pure states $\left|  \varphi
_{j}\right\rangle \left\langle \varphi_{j}\right|  \in\mathcal{P}\left(
\mathbb{C}^{d}\right)  $ ($j=1,\ldots,k$) such that $\Pi_{j}=\left(
d/k\right)  \left|  \varphi_{j}\right\rangle \left\langle \varphi_{j}\right|
$. (Here and henceforth, we use Dirac's bra-ket notation.) Thus,
$\sum\nolimits_{j=1}^{k}\left|  \varphi_{j}\right\rangle \left\langle
\varphi_{j}\right|  =\left(  k/d\right)  \cdot\mathbb{I}$. In particular, if
$k=d$ and so $\left(  \varphi_{j}\right)  _{j=1}^{d}$ is an orthonormal basis
of $\mathbb{C}^{d}$, we get a special class of \textsl{rank-}$1$\textsl{ projection valued
measures} (\textsl{PVMs}).

If the state of the system before the measurement (the \textsl{input state})
is $\rho\in\mathcal{S}\left(  \mathbb{C}^{d}\right)  $, then the probability
$p_{j}\left(  \Pi,\rho\right)  $ of the $j$-th outcome is given by
$p_{j}\left(  \Pi,\rho\right)  :=\operatorname{tr}\left(  \rho\Pi_{j}\right)
$ for $j=1,\ldots,k$. In particular, for normalized rank-$1$ POVMs we have
$p_{j}\left(  \Pi,\rho\right)  =\left(  d/k\right)  \left\langle \varphi
_{j}\right|  \rho\left|  \varphi_{j}\right\rangle $, and if ${\rho=\left|
\psi\right\rangle \left\langle \psi\right|  \in\mathcal{P}\left(
\mathbb{C}^{d}\right)  }$, then $p_{j}\left(  \Pi,\rho\right)  =\left(
d/k\right)  \left|  \left\langle \varphi_{j}|\psi\right\rangle \right|  ^{2}$
(the \textsl{Born rule}). The measurement process generically alters the state
of the system, but the POVM alone is not sufficient to determine the
post-measurement (or \textsl{output}) state. This can be done by defining a
\textsl{measurement instrument} (in the sense of Davies and Lewis
\cite{DavLew70}) compatible with $\Pi$, see also \cite[Ch. 5]{HeiZim11}. We
shall only consider here the so-called \textsl{generalised L\"{u}ders
instruments}, disturbing the initial state in the minimal way, see
\cite[p. 404]{DecGra07}, where the output state is $\left|  \varphi
_{j}\right\rangle \left\langle \varphi_{j}\right|  $, providing the result of
the measurement was $j$.

Consider the situation where the successive measurements described by $\Pi$ are performed
on an evolving quantum system. We assume that the motion of the system between two subsequent
measurements is governed by ${U\in\mathsf{U}\left(  d\right)}  $ acting as
$\mathcal{S}\left(  \mathbb{C}^{d}\right)  \ni\rho\rightarrow U\rho\,U^{\ast
}\in\mathcal{S}\left(  \mathbb{C}^{d}\right)  $. Then the results of
consecutive measurements are represented by finite strings of letters from a
$k$-element alphabet. The probability of obtaining the string $\left(
i_{1},\ldots,i_{n}\right)  $, where $i_{m}=1,\ldots,k$ for $m=1,\ldots,n$, and
$n\in\mathbb{N}$, is then given by the \textit{generalized Wigner formula} \cite{Wig63}
\[
P_{i_{1},\ldots,i_{n}}\left(  \rho\right)  :=p_{i_{1}}\left(  \rho\right)
\cdot
{\textstyle\prod\nolimits_{m=1}^{n-1}}
p_{i_{m}i_{m+1}}\text{,}%
\]
where $\rho$ is the initial state of the system, $p_{j}\left(  \rho\right)
:=\left(  d/k\right)  \left\langle \varphi_{j}\right|  \rho\left|  \varphi
_{j}\right\rangle $ is the probability of obtaining $j$ in the first
measurement, and $p_{jl}:=\left(  d/k\right)  \left|  \left\langle \varphi
_{j}|U|\varphi_{l}\right\rangle \right|  ^{2}$ is the probability of getting
$l$ as the result of the measurement, providing the result of the preceding
measurement was $j$, for $j,l=1,\ldots,k$. In consequence, the combined
evolution of states is Markovian with the initial distribution given by
$p:=(p_{j})_{j=1}^{k}$ and the transition matrix $P:=(p_{jl})_{j,l=1}^{k}$
\cite{SloZyc94,Slo03}. For rank-$1$ PVMs this matrix is \textsl{unistochastic}.

The randomness of the measurement outcomes can be analysed with the help of
\textsl{quantum entropy of} $U$ \textsl{with respect to }$\Pi$ defined in a way
analogous to the Kolmogorov-Sinai entropy, namely
{\
\begin{align}
H(U,\Pi):=\lim_{n\rightarrow\infty}\left(  H_{n+1}-H_{n}\right)
=\lim_{n\rightarrow\infty}\frac{H_{n}}{n}\text{,}%
\label{limits}
\end{align}
}
where
\[
H_{n}:=\sum_{j_{1},\ldots,j_{n}=1}^{k}\eta\left(  P_{j_{1},\ldots,j_{n}%
}\left(  \rho_{\ast}\right)  \right)
\]
with the \textsl{Shannon function} $\eta\colon\mathbb{R}^{+}\rightarrow
\mathbb{R}$ defined by $\eta(x):=-x\ln x$ for $x>0$ and $\eta(0):=0$, where
$\rho_{\ast}:=\mathbb{I}/d$. It is easy to check that both limits in
(\ref{limits}) exist and are equal. The maximally mixed state $\rho_{\ast}$
plays here the role of the `stationary state' for Markov evolution with
$p_{j}\left(  \rho_{\ast}\right)=1/k$ for $j=1,\ldots,k$ \cite{SloZyc94,Slo03}.
It represents an \textsl{unprepared} quantum system.

Using the formula for the entropy of a Markov chain, which is a special case
of a much more general \textsl{integral entropy formula} \cite{Slo03}, it is
easy to show \cite[eq. (24)]{SloSzy16} that
\begin{align}
H(U,\Pi)  &  =\frac{1}{k}\sum_{j,l=1}^{k}\eta\left(  p_{jl}\right)  \nonumber \\ & =\ln\frac{k}{d}+\frac{d}{k^{2}}\sum_{j,l=1}^{k}\eta\left(  |\left\langle
\varphi_{j}\right|  U\left|  \varphi_{l}\right\rangle |^{2}\right)  \text{.}
\label{formula}
\end{align}
In consequence,
\begin{equation}
\ln\left(  k/d\right)  \leq H(U,\Pi)\leq\ln k\text{.} \label{est}%
\end{equation}

There are two possible sources of randomness in this model, the measurement process and
the underlying unitary dynamics, and we would like to quantify their impact
separately. This can be done by defining two quantities:

\begin{itemize}
\item \emph{\ }the \textsl{measurement entropy of }$\Pi$ given by
\[
H_{\operatorname*{meas}}(\Pi):=H(\mathbb{I},\Pi)\text{;}%
\]

\item \emph{\ }the \textsl{dynamical entropy of }$U$\textsl{\ with respect to
}$\Pi$ given by
\[
H_{\operatorname*{dyn}}(U,\Pi):=H(U,\Pi)-H_{\operatorname*{meas}}(\Pi)\text{.}%
\]
\end{itemize}

Now, we introduce two kinds of measurement independent quantum dynamical
entropies, by maximizing $H_{\operatorname*{dyn}}$ either over all PVMs or all
POVMs. Namely, the \textsl{PVM-dynamical entropy~of}~$U$:%
\begin{equation}
H^{\operatorname*{dyn}}(U):=\max_{\Pi\in\mathsf{PVM}}H_{\operatorname*{dyn}}(U,\Pi)\text{,}
\label{entPVM}%
\end{equation}
and the \textsl{POVM-dynamical entropy of }$U$:%
\begin{equation}
\overline{H}^{\operatorname*{dyn}}(U):=\sup_{\Pi\in\mathsf{POVM}}H_{\operatorname*{dyn}}%
(U,\Pi)\text{.} \label{entPOVM}%
\end{equation}

Analogously, we can define quantum dynamical entropy for an antiunitary transformation.

Note that the set of all PVMs, i.e., all projective orthonormal (ordered)
bases, forms, endowed with a natural topology, a compact space isomorphic
to the $d(d-1)$-dimensional flag manifold $\mathsf{U}(d)/\mathsf{U}(1)^{d}$ \cite[p. 133]{BenZyc06}.
Now, it follows from (\ref{formula}) that $H_{\operatorname*{dyn}}$ is continuous
in both variables. Hence, the supremum is attainable in (\ref{entPVM}) and $H^{\operatorname*{dyn}}$ is continuous.

For every $U\in\mathsf{U}\left(  d\right)  $ we have $\min_{\Pi
\in\mathsf{PVM}}H(U,\Pi)=0$, since the PVM $\Pi$ generated with the help of an
eigenbasis of $U$ gives $H(U,\Pi)=0$. In consequence, we cannot define here a
quantum counterpart of classical \textit{Kolmogorov automorphisms}
(\textit{K-systems}), i.e., maps with positive entropy with respect to all
non-trivial finite partitions of the phase space.

For $\Pi\in\mathsf{PVM}$ we have $H_{\operatorname*{meas}}(\Pi)=0$, and so
$H_{\operatorname*{dyn}}(U,\Pi)=H(U,\Pi)$. Consequently, we get
\[
H^{\operatorname*{dyn}}(U)=\max_{\Pi\in\mathsf{PVM}}H(U,\Pi)\text{,}%
\]
which implies
\[
H^{\operatorname*{dyn}}(U)=\max_{(e_{j})_{j=1}^{d}}\frac{1}{d}\sum_{j,l=1}%
^{d}\eta\left(  |\left\langle e_{j}\right|  U\left|  e_{l}\right\rangle
|^{2}\right)  \text{,}%
\]
where the maximum is taken over all orthonormal bases. Equivalently, we can
fix a basis (e.g., an eigenbasis of $U$) and take the maximum over all unitary
transformations:
\begin{equation}
H^{\operatorname*{dyn}}(U)=\max_{V\in\mathsf{U}(d)}\frac{1}{d}\sum_{j,l=1}%
^{d}\eta(|\left(  V^{\ast}UV\right)  _{jl}|^{2})\text{.} \label{entdynV}
\smallskip
\end{equation}

Moreover, from (\ref{est}) we get $\left|  H_{\operatorname*{dyn}}\left(
U,\Pi\right)  \right|  \leq\ln d$ and
\[
0\leq H^{\operatorname*{dyn}}(U)\leq\overline{H}^{\operatorname*{dyn}}%
(U)\leq\ln d\text{.}%
\]
The bounds are achievable, as we have $\overline{H}^{\operatorname*{dyn}%
}(\mathbb{I})=0$ and $H^{\operatorname*{dyn}}(F_{d}/\sqrt{d})=\ln d$, where
$F_{d}/\sqrt{d}$ is a unitary operator called the \textsl{quantum Fourier transform},
with $F_{d}$ represented in some basis by the \textsl{Fourier matrix} of size $d$,
given by $(\omega_{d}^{\left(  j-1\right)  \left(l-1\right)  })_{j,l=1}^{d}$
with $\omega_{d}:=\exp(2\pi i/d)$.

The following proposition that summarizes facts concerning invariance of the
dynamical entropies is easy to show.

\begin{proposition}
[invariance]The dynamical entropies ${H}^{\operatorname*{dyn}}$ and
$\overline{H}^{\operatorname*{dyn}}$ are invariant under the following operations

\begin{enumerate}
\item [(i)]\textsl{conjugation}: $U\rightarrow V^{-1}UV$ for every unitary or
antiunitary~$V$;

\item[(ii)] \textsl{inversion}: $U\rightarrow U^{-1}$;

\item[(iii)] \textsl{phase multiplication}: $U\rightarrow e^{i\varphi}U$ for
$\varphi\in\mathbb{R}$.
\end{enumerate}
\end{proposition}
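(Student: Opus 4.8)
The plan is to reduce all three claims to a single mechanism: each of the three operations on $U$ induces, at the level of the transition probabilities $p_{jl}=(d/k)\,|\langle\varphi_j|U|\varphi_l\rangle|^2$ attached to a normalized rank-$1$ POVM $\Pi=\{\varphi_j\}$, either the identity, a transposition of indices, or a relabelling of $\Pi$; none of these changes $\sum_{j,l}\eta(p_{jl})$. Combined with the fact that $H_{\operatorname*{meas}}(\Pi)=H(\mathbb{I},\Pi)$ does not involve $U$, this gives invariance of $H_{\operatorname*{dyn}}(\cdot,\Pi)$ for each fixed $\Pi$ (in (ii) and (iii)) or under a self-bijection of the set of measurements (in (i)), and passing to the supremum/maximum over $\mathsf{PVM}$ resp. $\mathsf{POVM}$ closes the argument for both ${H}^{\operatorname*{dyn}}$ and $\overline{H}^{\operatorname*{dyn}}$.

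First, for (iii): since $|\langle\varphi_j|e^{i\varphi}U|\varphi_l\rangle|^2=|\langle\varphi_j|U|\varphi_l\rangle|^2$, we have $p_{jl}(e^{i\varphi}U)=p_{jl}(U)$, hence $H(e^{i\varphi}U,\Pi)=H(U,\Pi)$ for every $\Pi$ by (\ref{formula}); subtracting $H_{\operatorname*{meas}}(\Pi)$ and taking the supremum over all PVMs resp. POVMs gives the claim. For (ii): from $\langle\varphi_j|U^{-1}|\varphi_l\rangle=\langle\varphi_j|U^{*}|\varphi_l\rangle=\overline{\langle\varphi_l|U|\varphi_j\rangle}$ we get $p_{jl}(U^{-1})=p_{lj}(U)$, i.e.\ the transition matrix is transposed; as $\eta$ is applied entrywise and summed over all pairs $(j,l)$, the sum is unchanged, so again $H(U^{-1},\Pi)=H(U,\Pi)$ for every $\Pi$ and the rest follows as before.

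For (i), fix a unitary or antiunitary $V$ and, given $\Pi$ with vectors $(\varphi_j)$, let $\Pi^{V}$ be the ensemble with vectors $(V^{-1}\varphi_j)$. One checks that conjugation by $V^{-1}$ sends the resolution of identity $\sum_j(d/k)|\varphi_j\rangle\langle\varphi_j|=\mathbb{I}$ to $\sum_j(d/k)|V^{-1}\varphi_j\rangle\langle V^{-1}\varphi_j|=\mathbb{I}$, so $\Pi^{V}$ is again a normalized rank-$1$ POVM (a PVM iff $\Pi$ is), and that $\Pi\mapsto\Pi^{V}$ is a bijection of $\mathsf{POVM}$ (resp.\ $\mathsf{PVM}$) onto itself. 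The key identity is $\langle V^{-1}\varphi_j|\,V^{-1}UV\,|V^{-1}\varphi_l\rangle=\langle V^{-1}\varphi_j|V^{-1}U\varphi_l\rangle$, which equals $\langle\varphi_j|U|\varphi_l\rangle$ when $V$ is unitary and $\overline{\langle\varphi_j|U|\varphi_l\rangle}$ when $V$ is antiunitary; in both cases the modulus, hence $p_{jl}$, is unchanged. Therefore $H(V^{-1}UV,\Pi^{V})=H(U,\Pi)$ and, taking $U=\mathbb{I}$ in the same identity, $H_{\operatorname*{meas}}(\Pi^{V})=H_{\operatorname*{meas}}(\Pi)$, so $H_{\operatorname*{dyn}}(V^{-1}UV,\Pi^{V})=H_{\operatorname*{dyn}}(U,\Pi)$; taking the supremum over $\Pi$ and using bijectivity of $\Pi\mapsto\Pi^{V}$ yields invariance of both dynamical entropies.

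There is no serious obstacle here; the only point needing a little care is the antiunitary case of (i), where one must track that $V^{-1}$ is antilinear (so that $V^{-1}UV$ is genuinely linear and the inner product above gets complex-conjugated) — harmless, since only $|p_{jl}|$ enters the formula. As an alternative route for the PVM-entropy alone, one can bypass the POVM bookkeeping by working directly from (\ref{entdynV}): the substitution $W\mapsto VW$, a bijection of $\mathsf{U}(d)$, gives $H^{\operatorname*{dyn}}(V^{-1}UV)=H^{\operatorname*{dyn}}(U)$ for unitary $V$, while an analogous substitution combined with the transposition trick of (ii) settles the antiunitary case and (ii) itself.
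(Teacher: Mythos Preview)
Your argument is correct and is precisely the natural verification the paper has in mind; the paper itself does not write out a proof at all, stating only that the proposition ``is easy to show.'' Your handling of the antiunitary case in (i), the transposition of the transition matrix in (ii), and the bijection $\Pi\mapsto\Pi^{V}$ on $\mathsf{PVM}$/$\mathsf{POVM}$ are exactly the checks needed, so there is nothing to add or compare.
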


It follows from (i) above that both these quantities are unitary
(and antiunitary) invariants (i.e., unitary class functions), and so they
depend only on the spectrum of $U$, since two unitary matrices are unitarily
similar if and only if they have the same spectrum (treated as a multiset).
Moreover, (ii) implies that they are time-reversal invariants. According
to (iii), both quantum dynamical entropies are also
projective invariants, and so they can be treated as class functions for the
projective unitary-antiunitary group. Let us now see how these facts can be used
to characterize the domain of both entropies.

First, from the above considerations it follows that the space of
conjugacy classes of unitary matrices is isomorphic to the $d$-th
symmetric product of $S^{1}$, i.e., the space of \mbox{$d$-element} multisets
contained in $S^{1}$, denoted by $SP^{d}(S^{1})$. Morton proved
that $SP^{d}(S^{1})$ is a fibre bundle over $S^{1}$ and the fibres are
$(d-1)$-dimensional discs \cite{Mor67}. Moreover, he showed that the bundle is trivial if
$d$ is odd, and it is non-orientable if $d$ is even, e.g.,
$SP^{1}(S^{1}) \simeq S^{1}$ and $SP^{2}(S^{1})$ is the M\"{o}bius strip.
Taking into account the phase multiplication invariance, one may further reduce
the domain of dynamical entropies to a set topologically isomorphic to
the $(d-1)$-dimensional disc. In particular, we show in Sec.~\ref{QUBITS} and
Sec.~\ref{QUTRITS}, respectively, that for $d=2$ the value of
${H}^{\operatorname*{dyn}}(U)$ depends on one real parameter, the angle
between two eigenvalues of $U$, and for $d=3$ it is a function of one
complex parameter, the trace of $U$ divided by a cube root of its
determinant.

To lower bound the mean value of the PVM-dynamical entropy averaged over the
ensemble of unitary matrices, we consider yet another unitary invariant, the
\textsl{PVM-average dynamical entropy}, given by $M(U):=\left\langle
H(U,\Pi)\right\rangle _{\Pi\in\mathsf{PVM}}$ for $U\in\mathsf{U}(d)$. Namely,
we have

\begin{theorem}
[mean entropy bounds]\label{meaent}
\[
\ln d-(1-\gamma)<\sum_{k=2}^{d}\frac{1}{k}=\langle M(U)\rangle
_{\mathsf{U}(d)}\negthinspace<\negthinspace\langle H^{\operatorname*{dyn}}(U)\rangle
_{\mathsf{U}(d)}\negthinspace <\negthinspace \ln d\text{,}%
\]
where $\gamma\approx0.577$ is Euler's constant.
\end{theorem}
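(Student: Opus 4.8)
The plan is to establish the chain of inequalities in Theorem~\ref{meaent} from the inside out, treating the central equality first and then the surrounding bounds. The key object is $M(U) = \langle H(U,\Pi)\rangle_{\Pi\in\mathsf{PVM}}$, and by formula~(\ref{formula}) with $k=d$ we have $M(U) = \frac{1}{d}\langle \sum_{j,l=1}^{d}\eta(|\langle e_j|U|e_l\rangle|^2)\rangle$, where the average is over Haar-random orthonormal bases $(e_j)_{j=1}^d$, equivalently over $V\in\mathsf{U}(d)$ acting on a fixed eigenbasis of $U$. Because $U$ is unitary and we are integrating over the full unitary group, the joint distribution of the entries $(V^\ast U V)_{jl}$ does not actually depend on $U$: conjugation-invariance of Haar measure means the columns of $V$ are just Haar-random orthonormal vectors, so $|\langle e_j|U|e_l\rangle|^2$ is distributed like $|\langle e_j|w_l\rangle|^2$ for independent-ish Haar-random unit vectors. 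Hence $\langle M(U)\rangle_{\mathsf{U}(d)}$ is in fact a constant, equal to $M(U)$ for every $U$, and reduces to computing $\langle \eta(|v_1|^2)\rangle$ for the first coordinate $v_1$ of a Haar-random unit vector in $\mathbb{C}^d$. The marginal of $|v_1|^2$ is well known to be $\mathrm{Beta}(1,d-1)$, i.e.\ density $(d-1)(1-x)^{d-2}$ on $[0,1]$. So the central equality becomes the integral identity $\frac{1}{d}\cdot d^2 \cdot \langle\eta(|v_1|^2)\rangle = d\int_0^1 (-x\ln x)(d-1)(1-x)^{d-2}\,dx = \sum_{k=2}^d \frac1k$; I would verify this by computing $\int_0^1 x(1-x)^{d-2}\ln x\,dx$ via the Beta-function/digamma identity $\int_0^1 x^{a-1}(1-x)^{b-1}\ln x\,dx = B(a,b)(\psi(a)-\psi(a+b))$ with $a=2$, $b=d-1$, and using $\psi(2)-\psi(d+1) = 1 - \sum_{k=1}^{d}\frac1k = -\sum_{k=2}^{d}\frac1k$. (One must be slightly careful that the $d$ diagonal and off-diagonal entries are not independent, but since $\eta$ is applied entrywise and then summed, linearity of expectation means only the marginal distribution of a single $|v_1|^2$ matters, so independence is irrelevant.)

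For the leftmost inequality, $\ln d - (1-\gamma) < \sum_{k=2}^d \frac1k$, this is the standard estimate $H_d := \sum_{k=1}^d \frac1k > \ln d + \gamma$ (the harmonic number exceeds its asymptotic value plus Euler's constant, since $H_d - \ln d$ decreases to $\gamma$), rearranged as $\sum_{k=2}^d\frac1k = H_d - 1 > \ln d + \gamma - 1$. This is elementary and I would just cite the monotonicity of $H_d-\ln d$ or give the one-line integral-comparison argument $\frac1k > \int_k^{k+1}\frac{dx}{x}$ refined appropriately.

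The rightmost inequality $\langle H^{\operatorname*{dyn}}(U)\rangle_{\mathsf{U}(d)} < \ln d$ follows because $H^{\operatorname*{dyn}}(U) \le \ln d$ always (established in the excerpt via~(\ref{est})), with equality iff $U$ is chaotic; since the chaotic unitaries form a proper closed subset of positive codimension (or at any rate a null set — one can note $H^{\operatorname*{dyn}}$ is continuous and not identically $\ln d$, e.g.\ $H^{\operatorname*{dyn}}(\mathbb{I})=0$, so $\{H^{\operatorname*{dyn}}<\ln d\}$ is a nonempty open set of full measure), the average is strictly less than $\ln d$. The remaining inequality $\langle M(U)\rangle_{\mathsf{U}(d)} < \langle H^{\operatorname*{dyn}}(U)\rangle_{\mathsf{U}(d)}$ is the genuinely substantive step and the one I expect to be the main obstacle: since $H(U,\Pi) \le \max_\Pi H(U,\Pi) = H^{\operatorname*{dyn}}(U)$ pointwise, we certainly get $M(U) \le H^{\operatorname*{dyn}}(U)$; to get strictness after averaging it suffices to show $M(U) < H^{\operatorname*{dyn}}(U)$ on a set of positive measure, equivalently that $H(U,\Pi)$ is not a.s.\ constant in $\Pi$ for a positive-measure set of $U$. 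I would argue this by exhibiting, for generic $U$, two PVMs giving different values of $H(U,\Pi)$ — e.g.\ the eigenbasis of $U$ gives $H(U,\Pi)=0$ while a generic basis gives something positive — so the random variable $H(U,\Pi)$ has positive variance and hence its mean $M(U)$ is strictly below its essential supremum $H^{\operatorname*{dyn}}(U)$; alternatively, invoke strict concavity of $\eta$ together with the fact that for generic $U$ the entrywise moduli $|\langle e_j|U|e_l\rangle|^2$ are genuinely random (not a.s.\ constant). Combining these four steps in the displayed order yields the full chain.
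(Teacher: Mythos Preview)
Your central claim that $M(U)$ is constant in $U$ is false, and this is the genuine gap. Take $U=\mathbb{I}$: then $H(\mathbb{I},\Pi)=0$ for every PVM $\Pi$, so $M(\mathbb{I})=0\neq\sum_{k=2}^d 1/k$. The error is in the sentence ``the joint distribution of the entries $(V^\ast UV)_{jl}$ does not actually depend on $U$'': for $U=\mathbb{I}$ this matrix is identically $\mathbb{I}$ regardless of $V$. Conjugation-invariance of Haar measure tells you that $V\mapsto WVW^\ast$ preserves the distribution of $V$, not that $V\mapsto V^\ast UV$ has a $U$-independent law.

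What \emph{is} constant is the other average: $\langle H(U,\Pi)\rangle_{U\in\mathsf{U}(d)}$ does not depend on $\Pi$, because replacing $\Pi$ by $W\Pi$ amounts to replacing $U$ by $W^\ast UW$, and Haar measure is conjugation-invariant. Your Beta$(1,d-1)$ computation is exactly right for this quantity: for fixed $\Pi$ the entry $|\langle e_j|U|e_l\rangle|^2=|U_{jl}|^2$ with $U$ Haar-distributed has that marginal, and your digamma calculation gives $\sum_{k=2}^d 1/k$ (this is the result the paper cites from Jones). Then Fubini yields $\langle M(U)\rangle_{\mathsf{U}(d)}=\langle\langle H(U,\Pi)\rangle_\Pi\rangle_U=\langle\langle H(U,\Pi)\rangle_U\rangle_\Pi=\sum_{k=2}^d 1/k$. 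So the equality survives once you swap which variable is frozen.

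Everything else in your outline is fine. Your pointwise argument $M(U)\le H^{\operatorname{dyn}}(U)$, with strict inequality whenever $\Pi\mapsto H(U,\Pi)$ is nonconstant (i.e.\ whenever $U$ is not a scalar, since the eigenbasis gives $0$), is actually a bit more transparent than the paper's route, which writes $\sum 1/k=\max_\Pi\langle H(U,\Pi)\rangle_U\le\langle\max_\Pi H(U,\Pi)\rangle_U$ and then appeals to continuity of $H^{\operatorname{dyn}}$ for strictness of both remaining inequalities.
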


\begin{proof}
All entropies are bounded from above by $\ln d$. On the other hand, from Jones
(\mbox{\cite[eq. (13)]{Jon90}} and \cite[eq. (27)]{Jon91}), see also
\cite{SloZyc98,Zycetal03}, we deduce that $\left\langle H(U,\Pi)\right\rangle
_{\mathsf{U}(d)}=\sum_{k=2}^{d}\frac{1}{k}$ for every $\Pi\in\mathsf{PVM}$.
Hence,
\begin{align*}
\gamma-1+\ln d  &  <\sum_{k=2}^{d}\frac{1}{k}=\left\langle M(U)\right\rangle
_{\mathsf{U}(d)}\\
&  =\max_{\Pi\in\mathsf{PVM}}\left\langle H(U,\Pi)\right\rangle _{\mathsf{U}%
(d)} \\
& \leq\left\langle \max_{\Pi\in\mathsf{PVM}}H(U,\Pi)\right\rangle
_{\mathsf{U}(d)}\\
&  =\left\langle H^{\operatorname*{dyn}}(U)\right\rangle _{\mathsf{U}(d)}%
\leq\ln d\text{.}%
\end{align*}
From the continuity of $H^{\operatorname*{dyn}}$ it follows that the last two
inequalities are strict.
\end{proof}

In consequence, we see that the mean values of both entropies
$H^{\operatorname*{dyn}}$ and $\overline{H}^{\operatorname*{dyn}}$ are almost
as large as possible and increase logarithmically with the dimension of the
Hilbert space. Moreover, from Chebyshev's inequality we deduce that the
probability of $H^{\operatorname*{dyn}}\leq\ln d-f(d)$,
for $f:\mathbb{N}\rightarrow\mathbb{R}^{+}$, is smaller than
$(1- \gamma)/f(d)$, and so it tends to $0$, providing
$f(d)\rightarrow\infty$, even if the latter convergence is very slow. In Sec.
\ref{QUBITS} we compute the exact value of $\left\langle
H^{\operatorname*{dyn}}(U)\right\rangle _{\mathsf{U}(d)}$ for $d=2$.

\section{Entropy-maximising unitaries\label{CHAOS}}

The concept of quantum dynamical entropy specifies a special class of
entropy-maximising unitaries, such as the Fourier quantum transforms mentioned above.
We shall call them \textsl{chaotic} since they can be used to produce
maximally random sequences of measurement results. As we shall see, this
property does not depend on which of the two definitions we work with. Namely,
from (\ref{est}) it follows that
\[
H_{\operatorname*{dyn}}(U,\Pi)=\ln d \ \ \text{iff}
\]
\begin{equation}
H(U,\Pi)=\ln k \ \ \text{and}\ \ H(\mathbb{I},\Pi)=\ln(k/d)\text{.} \label{max}
\end{equation}
By (\ref{formula}), we get $H(\mathbb{I},\Pi)=\ln(k/d)$ if and only if $\Pi$
is a PVM, i.e., $k=d$, and then, clearly, $H(\mathbb{I},\Pi)=0$. Thus,
\[
\overline{H}^{\operatorname*{dyn}}(U)=\ln d\ \ \text{iff}%
\ \ H^{\operatorname*{dyn}}(U)=\ln d\text{.}%
\]
Moreover, chaotic unitaries turn out to be exactly those that are represented
by a suitably rescaled \textsl{complex Hadamard matrix} in some basis.

\begin{proposition}
\label{Hadamard}Let $U\in\mathsf{U}(d)$. Then the following conditions are equivalent:

\begin{enumerate}
\item [(i)]$U$ is chaotic;

\item[(ii)] there exists an orthonormal basis $\{e_{j}\}_{j=1}^{d}$ such that
$\frac{1}{d}\sum_{j,l=1}^{d}\eta\left(  |\left\langle e_{j}\right|  U\left|
e_{l}\right\rangle |^{2}\right)  =\ln{d}$;

\item[(iii)] there exists an orthonormal basis $\{e_{j}\}_{j=1}^{d}$ such that
$\{e_{j}\}_{j=1}^{d}$ and $\{Ue_{j}\}_{j=1}^{d}$ are mutually unbiased;

\item[(iv)] there exists an orthonormal basis $\{e_{j}\}_{j=1}^{d}$ such that
$\sum_{j,l=1}^{d}|\left\langle e_{j}\right|  U\left|  e_{l}\right\rangle
|=d\sqrt{d}$;

\item[(v)] $\sqrt{d}\,U$ is represented by a complex Hadamard matrix in some
orthonormal basis $\{e_{j}\}_{j=1}^{d}$, i.e., $|\left\langle e_{j}\right|
U\left|  e_{l}\right\rangle |=1/\sqrt{d}$ for each $j,l=1,\ldots,d$.
\end{enumerate}
\end{proposition}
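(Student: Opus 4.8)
The plan is to fix an orthonormal basis $\{e_{j}\}_{j=1}^{d}$ and abbreviate $b_{jl}:=|\langle e_{j}|U|e_{l}\rangle|\geq 0$. Since the matrix $(\langle e_{j}|U|e_{l}\rangle)_{j,l=1}^{d}$ is unitary, the completeness relations $\sum_{j}|e_{j}\rangle\langle e_{j}|=\sum_{l}|e_{l}\rangle\langle e_{l}|=\mathbb{I}$ give $\sum_{l=1}^{d}b_{jl}^{2}=1$ for each $j$ and $\sum_{j=1}^{d}b_{jl}^{2}=1$ for each $l$, hence $\sum_{j,l=1}^{d}b_{jl}^{2}=d$. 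With this in hand I would first dispose of (i)$\Leftrightarrow$(ii) using the variational formula for $H^{\operatorname*{dyn}}$, and then show that each of (ii), (iii), (iv) is equivalent to (v).

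For (i)$\Leftrightarrow$(ii): recall that $H^{\operatorname*{dyn}}(U)=\max_{\{e_{j}\}}\frac{1}{d}\sum_{j,l=1}^{d}\eta(b_{jl}^{2})$, where the maximum over orthonormal bases is attained because the flag manifold of PVMs is compact and the functional continuous (as already noted in Sec.~\ref{BASIC}), and that $H^{\operatorname*{dyn}}(U)\leq\ln d$ by (\ref{est}). Since $U$ is chaotic precisely when $H^{\operatorname*{dyn}}(U)=\ln d$, and the maximum is attained, this holds if and only if some orthonormal basis realizes the value $\ln d$ in that sum, which is exactly condition (ii).

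The equivalence (iii)$\Leftrightarrow$(v) is immediate from the definition of mutual unbiasedness: $\{e_{j}\}_{j=1}^{d}$ and $\{Ue_{j}\}_{j=1}^{d}$ are mutually unbiased iff $|\langle e_{j}|Ue_{l}\rangle|^{2}=1/d$ for all $j,l$, i.e. iff $b_{jl}=1/\sqrt{d}$ for all $j,l$. The implications (v)$\Rightarrow$(ii) and (v)$\Rightarrow$(iv) are direct substitutions using $\eta(1/d)=(\ln d)/d$: one gets $\frac{1}{d}\sum_{j,l}\eta(b_{jl}^{2})=\frac{1}{d}\cdot d^{2}\cdot\frac{\ln d}{d}=\ln d$ and $\sum_{j,l}b_{jl}=d^{2}\cdot\frac{1}{\sqrt{d}}=d\sqrt{d}$. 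The substance lies in the two converses, both of which are sharp-inequality arguments applied to the nonnegative vector $(b_{jl}^{2})_{j,l}$ of $d^{2}$ entries summing to $d$. For (ii)$\Rightarrow$(v): the Shannon function $\eta$ is strictly concave, so Jensen's inequality yields $\frac{1}{d^{2}}\sum_{j,l}\eta(b_{jl}^{2})\leq\eta\!\left(\frac{1}{d^{2}}\sum_{j,l}b_{jl}^{2}\right)=\eta(1/d)=\frac{\ln d}{d}$, i.e. $\frac{1}{d}\sum_{j,l}\eta(b_{jl}^{2})\leq\ln d$, with equality if and only if all $b_{jl}^{2}$ are equal, necessarily to $1/d$; if (ii) holds the equality case applies, giving $b_{jl}=1/\sqrt{d}$, i.e. (v). For (iv)$\Rightarrow$(v): by Cauchy--Schwarz, $\sum_{j,l}b_{jl}\leq\sqrt{d^{2}}\,\bigl(\sum_{j,l}b_{jl}^{2}\bigr)^{1/2}=d\sqrt{d}$, with equality if and only if all $b_{jl}$ are equal, again necessarily to $1/\sqrt{d}$; since (iv) realizes the equality, (v) follows.

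I do not expect a serious obstacle; the only points requiring care are the equality-case bookkeeping — one must note that $b_{jl}^{2}=1/d$ forces $b_{jl}=1/\sqrt{d}$ since $b_{jl}\geq0$, and that equality in the aggregated Jensen / Cauchy--Schwarz bound really is equivalent to constancy of the whole vector, which is where strictness is used — together with the appeal to compactness of the flag manifold for attainment of the maximum in (i)$\Leftrightarrow$(ii), already secured in Sec.~\ref{BASIC}. Chaining (i)$\Leftrightarrow$(ii), (ii)$\Leftrightarrow$(v), (iii)$\Leftrightarrow$(v) and (iv)$\Leftrightarrow$(v) then gives the full equivalence.
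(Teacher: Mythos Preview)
Your proof is correct and follows the same overall architecture as the paper's: establish (i)$\Leftrightarrow$(ii) from the variational definition, note (iii)$\Leftrightarrow$(v) is definitional, and reduce (ii) and (iv) to (v) via sharp inequalities. The one substantive difference is in (iv)$\Leftrightarrow$(v): the paper outsources this to \cite[Proposition~4.12]{Ban12}, whereas you supply the elementary Cauchy--Schwarz argument directly, which makes the proof self-contained. Your treatment of (ii)$\Rightarrow$(v) via a single Jensen inequality on all $d^{2}$ entries is also slightly different in flavour from the paper's phrasing (``Shannon entropy is maximal only for the uniform distribution''), which is most naturally read column-by-column using $\sum_{j}b_{jl}^{2}=1$; both are valid and yield the same equality case. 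Your care about the strict-concavity/equality bookkeeping is well placed and suffices.
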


\begin{proof}
The equivalence of (i) and (ii) follows immediately from (\ref{formula}) and
(\ref{max}). As the Shannon entropy is maximal only for the uniform
distribution, all expressions of the form $|\left\langle e_{j}\right|  U\left|
e_{l}\right\rangle |^{2}$ for $j,l=1,\ldots,d$ must be equal, which proves the
equivalence of (ii) and (v). On the other hand, (v) is just (iii) expressed in
another way. The equivalence of (iv) and (v) follows from \cite[Proposition 4.12]{Ban12}.
\end{proof}

The fact that Hadamard matrices saturate the upper bound for the so-called
entropy of a unitary matrix is well known \cite{Zycetal03}. Observe, however,
that the analogous problem for real orthogonal matrices is highly non-trivial
\cite{Gadetal03,Pat12}, since real Hadamard matrices can exist only if $d=1,2$ or is
a multiple of $4$.

From Proposition \ref{Hadamard} we deduce immediately a simple necessary
condition for $U$ to be chaotic.

\begin{corollary}
\label{trace}If $U\in\mathsf{U}(d)$ is chaotic, then $\left|
\operatorname*{tr}U\right|  \leq\sqrt{d}$.
\end{corollary}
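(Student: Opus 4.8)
The plan is to exploit the characterization (v) in Proposition \ref{Hadamard}: if $U$ is chaotic, then there is an orthonormal basis $\{e_j\}_{j=1}^d$ in which $|\langle e_j|U|e_l\rangle| = 1/\sqrt{d}$ for all $j,l$. Fixing such a basis, the matrix $M_{jl} := \langle e_j|U|e_l\rangle$ is a scalar multiple $\tfrac{1}{\sqrt d}H$ of a complex Hadamard matrix $H$, and $\operatorname{tr} U = \operatorname{tr} M = \tfrac{1}{\sqrt d}\sum_{j=1}^d H_{jj}$. Since each diagonal entry $H_{jj}$ has modulus $1$, the triangle inequality gives $|\operatorname{tr} U| = \tfrac{1}{\sqrt d}\bigl|\sum_{j=1}^d H_{jj}\bigr| \le \tfrac{1}{\sqrt d}\sum_{j=1}^d |H_{jj}| = \tfrac{d}{\sqrt d} = \sqrt d$, which is exactly the claim.

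The key steps, in order, are: first, invoke Proposition \ref{Hadamard}, equivalence (i)$\Leftrightarrow$(v), to produce the distinguished orthonormal basis; second, observe that the trace is basis-independent, so we may compute $\operatorname{tr} U$ using the matrix of $U$ in that basis; third, apply the triangle inequality to the $d$ diagonal entries, each of modulus $1/\sqrt d$. That is the whole argument.

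There is essentially no obstacle here — the corollary is an immediate consequence of the equivalence already established in Proposition \ref{Hadamard}, and the only ingredient beyond that is the triangle inequality together with the unitary (hence basis-independent) invariance of the trace. The one thing worth stating carefully is that equality $|\operatorname{tr} U| = \sqrt d$ can occur (it holds, for instance, whenever the Hadamard representative can be chosen with constant phase on the diagonal), so the bound is sharp; but proving sharpness is not needed for the corollary itself.
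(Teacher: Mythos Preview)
Your argument is correct and matches the paper's own reasoning: the corollary is stated there as an immediate consequence of Proposition~\ref{Hadamard}, and the only way to read that is exactly the triangle-inequality bound on the diagonal entries in the Hadamard basis that you wrote out. Nothing needs to be added or changed.
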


We shall see in Sec. \ref{QUTRITS} that for $d=2$, in contrast to higher
dimensions, this condition is also sufficient.

As \textsl{quantum gates} are represented by unitaries (defined up to a phase)
we can talk about \textsl{dynamical entropies of quantum gates} and we can
distinguish the class of \textsl{chaotic quantum gates}. Using the formula for
dynamical entropy presented in the next section, we shall see that among
chaotic unitaries one can list many well-known quantum gates, including the
Hadamard, \textrm{NOT} (Pauli-$X$), Pauli-$Y$, Phase Flip (Pauli-$Z$),
$\pi/4$-phase shift and $\sqrt{\text{\textrm{NOT}}}$ gates in dimension two. Also
the \textrm{CNOT} (\textrm{XOR}), \textrm{CSIGN}, \textrm{SWAP}, and
\textrm{iSWAP} gates in dimension four belong to this class.
To see this, observe that all these matrices are
unitarily similar to $D:=\operatorname*{diag}\left(  1,1,1,-1\right)  $.
The spectra of $D$ and the real-valued Hadamard matrix $F_{4}^{(1)}(3\pi/2)$,
see \cite{TadZyc06}, coincide. Thus, by Proposition~\ref{Hadamard}, $D$ is
chaotic. On the other hand, the $\pi/8$-phase shift gate in dimension two, as
well as the $\sqrt{\text{\textrm{CNOT}}}$ and $\sqrt{\text{\textrm{SWAP}}}$
gates in dimension four, do not fulfill the trace condition from Corollary
\ref{trace}, and so they are not chaotic. It follows also from this corollary
that among controlled-$U$ gates in dimension four, only the ones with
$U$ equivalent up to conjugation and phase multiplication to \textrm{NOT},
like \textrm{CNOT} or \textrm{CSIGN}, are chaotic.
In the same way we argue that multiqubit controlled
gates, like Toffoli (\textrm{CCNOT}), Fredkin (\textrm{CSWAP}) or Deutsch
(\textrm{CCR}) gates in dimension eight, cannot be chaotic.

\section{PVM-dynamical entropy: qubits\label{QUBITS}}

Computing the PVM-dynamical entropy in dimension two is a relatively easy
task, as the optimization problem reduces to finding the maxima of real-valued
functions belonging to a one-parameter family. Here, the parameter is the
angle between two eigenvalues of a unitary map. The formula for entropy in
this case has been already obtained in \cite{Slo03}, but for the sake of
completeness we recall hereafter its proof.

Let $U\in\mathsf{U}\left(  \mathbb{C}^{2}\right)  $ with the spectrum
$\left\{  \exp\left(  i\varphi\right)  ,\exp\left(  i\psi\right)  \right\}  $,
where $\varphi,\psi\in\left[  0,2\pi\right)  $. Fix an eigenbasis of $U$. In
this basis $U$ is represented by the matrix
\[
U\sim\left[
\begin{array}
[c]{cc}%
\exp\left(  i\varphi\right)   & 0\\
0 & \exp\left(  i\psi\right)
\end{array}
\right]  \text{.}%
\]
Consider now $V\in\mathsf{U}\left(  \mathbb{C}^{2}\right)  $ given by
\[
V\sim\left[
\begin{array}
[c]{cc}%
u & v\\
w & z
\end{array}
\right]  \text{,}%
\]
where $u,v,w,z\in\mathbb{C}$ satisfy $\left|  u\right|  ^{2}+\left|  v\right|
^{2}=\left|  w\right|  ^{2}+\left|  z\right|  ^{2}=1$ and $u\overline
{w}+v\overline{z}=0$. Then
\[
V^{\ast}UV\sim\left[
\begin{array}
[c]{cc}%
\left|  u\right|  ^{2}e^{i\varphi}+\left|  w\right|  ^{2}e^{i\psi} &
v\overline{u}e^{i\varphi}+z\overline{w}e^{i\psi}\\
u\overline{v}e^{i\varphi}+w\overline{z}e^{i\psi} & \left|  v\right|
^{2}e^{i\varphi}+\left|  z\right|  ^{2}e^{i\psi}%
\end{array}
\right]  \text{.}%
\]
Put $p:=\left|  u\right|  ^{2}\in\left[  0,1\right]  $, $\theta:=
\min \left( \left| \varphi -\psi \right| ,2\pi -\left| \varphi -\psi \right| \right)\in
\left[0,\pi\right]$,
and $c:=\sin^{2}\left(\theta/2\right)  \in\left[  0,1\right]  $.
As $\left|  z\right|  ^{2}=p$ and $\left|  w\right|  ^{2}=\left|  v\right|  ^{2}=1-p$,
we obtain
\begin{align}
& \frac{1}{2}\sum_{j,l=1}^{2}\eta(|\left(  V^{\ast}UV\right)  _{jl}|^{2}) \nonumber \\
& =\eta\left(  4p\left(  1-p\right)  c\right)  +\eta\left(  1-4p\left(
1-p\right)  c\right)  \text{.}\label{h_c}%
\end{align}
Denote the right-hand side of (\ref{h_c}) by $h_{c}\left(  p\right)  $. Then
$h_{c}\colon\left[  0,1\right]  \rightarrow\mathbb{R}$ attains the maximum
equal to $\ln2$ at $\frac{1}{2}(1\pm\sqrt{1-(2c)^{-1}})$ for  $c\geq1/2$, and
equal to $\eta\left(  c\right)  +\eta\left(  1-c\right)  $ at $1/2$ for
$c\leq1/2$. Using this fact and (\ref{entdynV}), we obtain (see Fig. 1)

\begin{proposition}
\label{entdyn2}%
\begin{equation}
H^{\operatorname*{dyn}}(U)\!=\!\left\{  \!%
\begin{array}
[c]{cc}%
\ln2 & \theta\geq\frac{\pi}{2}\\
\eta\left(  \cos^{2}\left(  \frac{\theta}{2}\right)  \right)  +\eta\left(
\sin^{2}\left(  \frac{\theta}{2}\right)  \right)  & \theta\leq\frac{\pi}{2}%
\end{array}
\right.  \text{\thinspace.} \label{fordim2}%
\end{equation}
\end{proposition}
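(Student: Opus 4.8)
The plan is to run the optimisation problem (\ref{entdynV}), with the eigenbasis of $U$ fixed, all the way down to a one-parameter maximisation. First I would note that every $V\in\mathsf{U}(2)$ has rows forming an orthonormal pair, which is exactly the constraint $|u|^2+|v|^2=|w|^2+|z|^2=1$, $u\overline{w}+v\overline{z}=0$ imposed in the excerpt; combining the last relation with the first two forces $|u|=|z|$, so with $p:=|u|^2\in[0,1]$ we get $|z|^2=p$ and $|v|^2=|w|^2=1-p$, and conversely every $p\in[0,1]$ arises from some such $V$. Hence the maximum over $V$ in (\ref{entdynV}) is a maximum over the single real parameter $p$.

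Next I would compute the four squared moduli $|(V^{\ast}UV)_{jl}|^{2}$. For the $(1,1)$ entry, $\bigl|\,|u|^{2}e^{i\varphi}+|w|^{2}e^{i\psi}\bigr|^{2}=p^{2}+(1-p)^{2}+2p(1-p)\cos(\varphi-\psi)$, and since $\cos(\varphi-\psi)=\cos\theta=1-2c$ this collapses to $1-4p(1-p)c$; the same computation gives the $(2,2)$ entry $1-4p(1-p)c$. Because $V^{\ast}UV$ is unitary, each of its rows has squared-modulus sum $1$, so both off-diagonal squared moduli equal $4p(1-p)c$. Substituting into $\tfrac12\sum_{j,l}\eta(\cdot)$, the four terms collapse to $\eta(4p(1-p)c)+\eta(1-4p(1-p)c)$, i.e. to $h_{c}(p)$ in the notation of (\ref{h_c}); thus $H^{\operatorname{dyn}}(U)=\max_{p\in[0,1]}h_{c}(p)$.

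It then remains to maximise $h_{c}(p)=\eta(s)+\eta(1-s)$ with $s:=4p(1-p)c$. The right-hand side is the binary Shannon entropy at $s$, strictly increasing on $[0,\tfrac12]$, strictly decreasing on $[\tfrac12,1]$, and equal to $\ln2$ at $s=\tfrac12$. As $p$ runs over $[0,1]$ the factor $4p(1-p)$ runs over $[0,1]$, so $s$ runs over $[0,c]$. If $c\ge\tfrac12$, then $\tfrac12\in[0,c]$ and the maximum is $\ln2$, attained when $4p(1-p)c=\tfrac12$, i.e. $p=\tfrac12\bigl(1\pm\sqrt{1-(2c)^{-1}}\bigr)$; if $c\le\tfrac12$, then $[0,c]\subseteq[0,\tfrac12]$ and the maximum is $h_{c}(\tfrac12)=\eta(c)+\eta(1-c)$, attained only at $p=\tfrac12$. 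Translating back through $c=\sin^{2}(\theta/2)$, the condition $c\ge\tfrac12$ reads $\theta\ge\pi/2$, and in the complementary range $\eta(c)+\eta(1-c)=\eta(\cos^{2}(\theta/2))+\eta(\sin^{2}(\theta/2))$, which is exactly (\ref{fordim2}).

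There is no serious obstacle: once the $2\times2$ unitary is parametrised the problem is genuinely one-dimensional. The only points needing a moment's care are (a) deducing $|u|=|z|$ from the orthonormality relations, so that all entries of $V^{\ast}UV$ depend on $V$ only through $p$, and (b) observing that $h_{c}$ depends on $p$ only through $s=4p(1-p)c\in[0,c]$, which reduces the whole optimisation to the monotonicity of the binary entropy. It is worth recording as a by-product which PVM is optimal: for $\theta\le\pi/2$ one takes $p=\tfrac12$, i.e. a basis whose Bloch vectors are orthogonal to the rotation axis of $U$ — the ``equatorial'' measurement, in line with the riddle of the Introduction.
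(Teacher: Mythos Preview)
Your argument is correct and follows essentially the same route as the paper: fix the eigenbasis, parametrise $V$ by $p=|u|^{2}$, reduce the objective to $h_c(p)=\eta(4p(1-p)c)+\eta(1-4p(1-p)c)$, and maximise over $p$ using the shape of the binary entropy. The only cosmetic differences are that you justify $|u|=|z|$ explicitly and read off the off-diagonal moduli from the unitarity of $V^{\ast}UV$ rather than computing them directly; both are harmless shortcuts, not a different strategy.
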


\begin{center}
\includegraphics[scale=0.4]{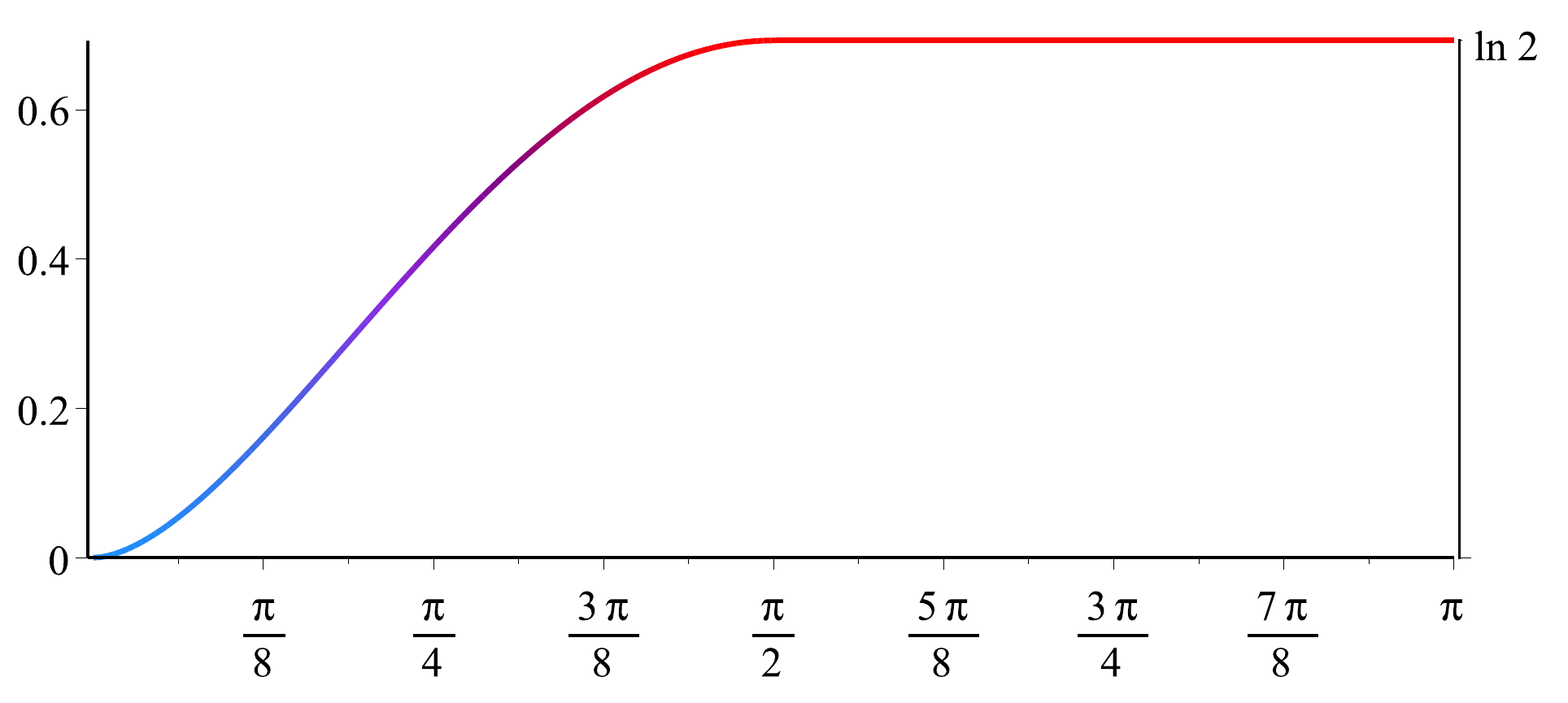}

Fig. 1. $H^{\operatorname*{dyn}}$\ as a function of $\theta$ (the chaotic
part in \textit{red}).
\end{center}

\noindent Denote by $\left\{  \left|  0\right\rangle ,\left|  1\right\rangle
\right\}  $ the eigenbasis of $U$. Observe that

\begin{itemize}
\item  the critical point at which $H^{\operatorname*{dyn}}$ hits its maximum
possible value $\ln2$ is $\theta=\frac{\pi}{2}$; this applies to well-known
$\pi/4$-phase shift and $\sqrt{\text{NOT}}$ gates;

\item  the PVMs with respect to which $H(U,\Pi)$ attains its maximal value are
given by the bases $\left\{  \left|  x^{\tau}\right\rangle ,\left|  x_{\perp
}^{\tau}\right\rangle \right\}  $, where $\tau$ is an arbitrary number from
$[0,2\pi)$, $\left|  x^{\tau
}\right\rangle :=\sqrt{r}\left|  0\right\rangle +e^{i\tau}\sqrt{1-r}\left|
1\right\rangle $ with $r:=\frac 1 2$ for $\theta\leq\frac{\pi}{2}$ and  $r:=\frac{1}{2}(1\pm\sqrt{1-(2\sin^{2}(\theta/2))^{-1}%
})$ for $\theta\geq\frac{\pi}{2}$.
\end{itemize}

The geometric interpretation of the latter fact, mentioned already in the
introduction, is the following. Fix the Bloch vectors corresponding to the
eigenbasis of $U$ as the north and south poles of the Bloch sphere. Then $U$
can be interpreted as the rotation around the north-south axis by the angle
\mbox{$\theta$}. Under this picture, finding a maximizing PVM\ is
equivalent to choosing the appropriate axis such that the angle between this
axis and its image under the rotation is maximal. If $\theta$ is acute, then
the axis must lie in the equatorial plane and the angle in question is equal
to $\theta$, but if $\theta$ is obtuse, we can find an axis that can be
transformed into a perpendicular thereto by the rotation.

\begin{center}
a) \includegraphics[scale=0.35]{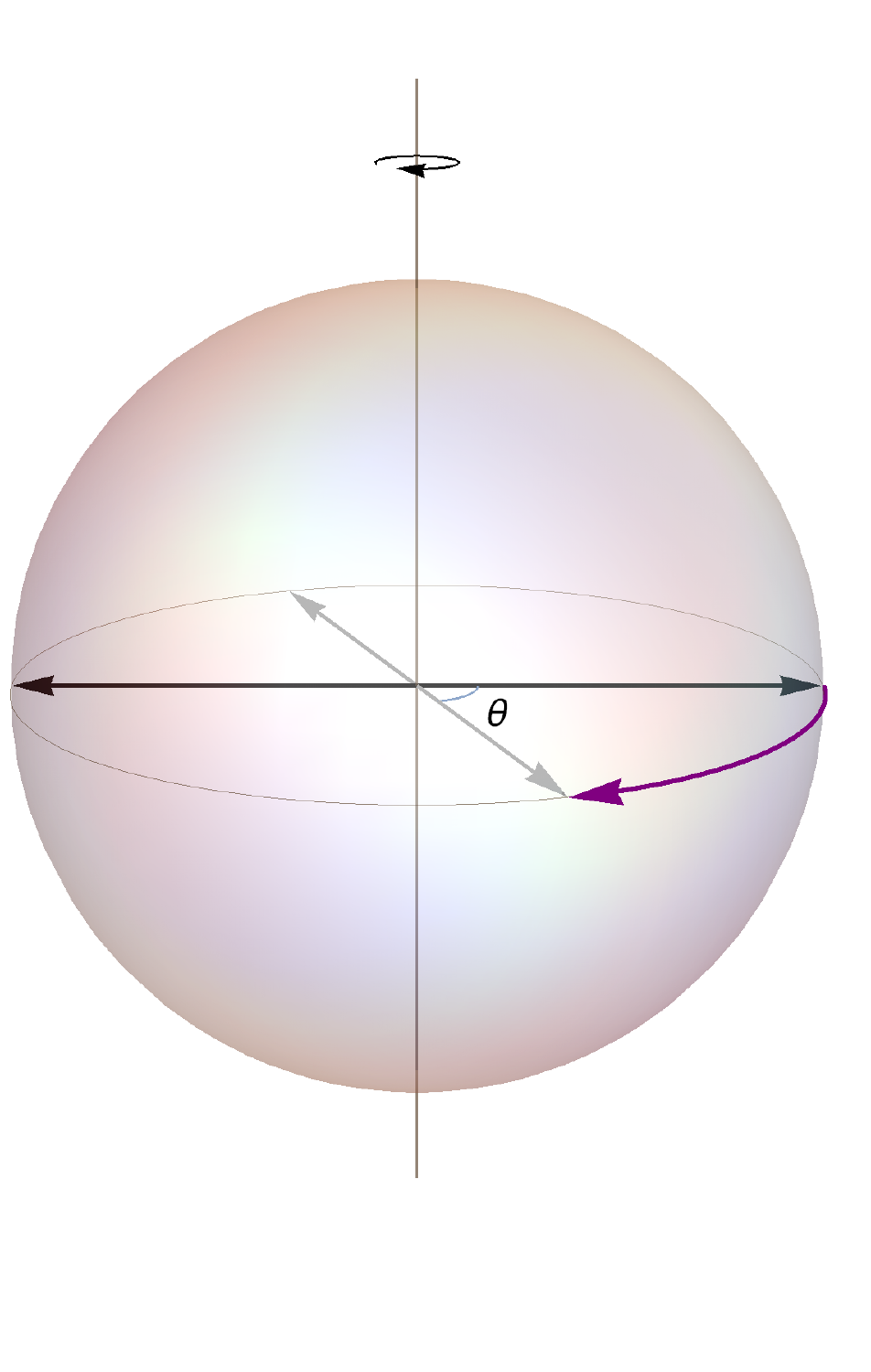}
b) \includegraphics[scale=0.35]{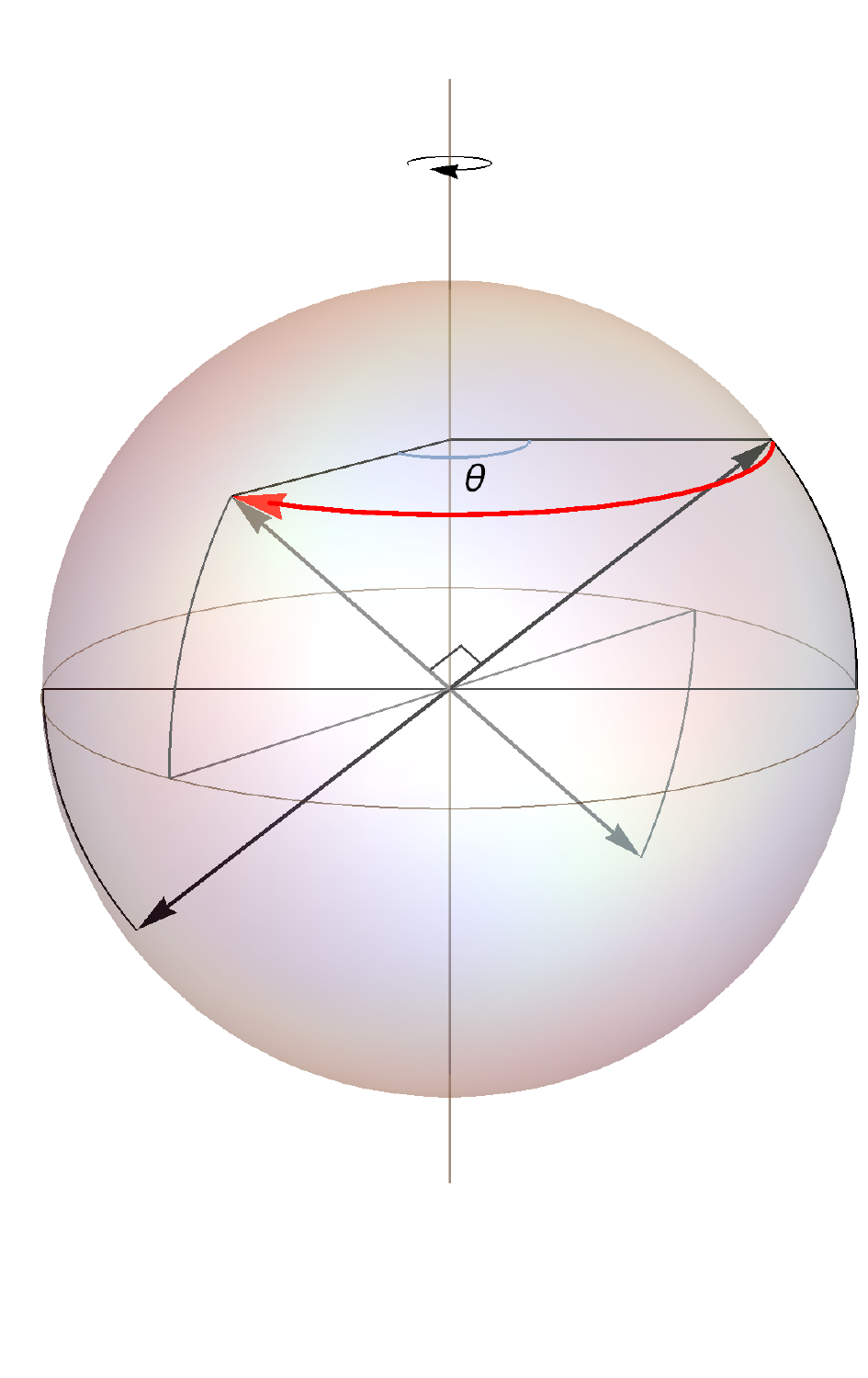}\smallskip

Fig. 2. Maximizers for the PVM-dynamical entropy in dimension $d=2$, where the
unitary map is represented in the Bloch sphere as a rotation by the angle: a)
acute (\textit{purple}) and b) obtuse (\textit{red}).
\end{center}

Next, we compute the volume of the set of chaotic operators in the ensemble of
unitary matrices as well as the average value of the PVM-dynamical entropy. To
this aim we use the Weyl integration formula for $\mathsf{U}(d)$ group
\cite[Theorem 7.4.B]{Wey53}. Recall that $F\colon\mathsf{U}(d)\rightarrow
\mathbb{C}$ is a \textsl{class function} if it is constant on the conjugacy
classes, i.e., for all $U,V\in\mathsf{U}(d)$ we have $F(U)=F(V^{\ast}UV)$.

\begin{theorem*}
[Weyl's integration formula]\label{weyl} If $F\in L^{1}(\mathsf{U}(d))$ is a
class function, then the following formula holds
\[
\int\limits_{\mathsf{U}(d)}F\left(  U\right)  dm\left(  U\right)  =
\]
\[
\frac{1}{d!\left(  2\pi\right)  ^{d}}\int\limits_{[0,2\pi)^{d}}f(\theta
_{1},\ldots,\theta_{d})\prod_{1\leq j<l\leq d}|e^{i\theta_{j}}-e^{i\theta_{l}%
}|^{2}d\theta_{1}\ldots d\theta_{d}\text{,}%
\]
where $m$ denotes the normalized Haar measure on $\mathsf{U}(d)$ and $f(\theta_{1},\ldots,\theta_{d}):=F(\Theta)$ for $\Theta
:={\operatorname*{diag}}(e^{i\theta_{1}},\ldots,e^{i\theta_{d}})$.
\end{theorem*}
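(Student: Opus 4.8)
The plan is to follow Weyl's original argument: pass from the integral over $\mathsf{U}(d)$ to one over the maximal torus
$T:=\{\operatorname{diag}(e^{i\theta_1},\dots,e^{i\theta_d}):\theta_j\in[0,2\pi)\}$ by integrating out the conjugation directions, the Jacobian of this change of variables turning out to be exactly the Weyl denominator $\prod_{j<l}|e^{i\theta_j}-e^{i\theta_l}|^2$. Write $W=S_d$ for the Weyl group, acting on $T$ by permuting the diagonal entries and realised inside $\mathsf{U}(d)$ by permutation matrices. First I would record the elementary facts about the conjugation map $q\colon\mathsf{U}(d)/T\times T\to\mathsf{U}(d)$, $q(gT,t):=gtg^{-1}$: it is well defined (since $T$ is abelian), smooth, and onto (every unitary is unitarily diagonalisable). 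On the open dense set $T_{\mathrm{reg}}\subset T$ of matrices with pairwise distinct eigenvalues the centraliser of $t$ in $\mathsf{U}(d)$ is exactly $T$, so $q$ restricts to a $|W|=d!$-fold covering of the regular locus $\mathsf{U}(d)_{\mathrm{reg}}$, whose complement is cut out by the vanishing of the discriminant and hence is a closed set of Haar measure zero, negligible for $L^1$-integration.

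Next I would compute the Jacobian of $q$. Equip $\mathfrak{u}(d)$ with the inner product $\langle A,B\rangle=\operatorname{Re}\operatorname{tr}(A^{\ast}B)$ and split $\mathfrak{u}(d)=\mathfrak{t}\oplus\mathfrak{t}^{\perp}$ into the diagonal and the off-diagonal skew-Hermitian matrices. Differentiating $q$ at $(eT,t)$ in the left-trivialisations sends $(X,H)\in\mathfrak{t}^{\perp}\oplus\mathfrak{t}$ to $(\operatorname{Ad}(t^{-1})-\mathrm{id})X+H$; since $\operatorname{Ad}(t^{-1})$ preserves $\mathfrak{t}^{\perp}$ and is the identity on $\mathfrak{t}$, the differential is block-diagonal and $J_q(t)=\det\big((\operatorname{Ad}(t^{-1})-\mathrm{id})|_{\mathfrak{t}^{\perp}}\big)$. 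For each $j<l$ the real plane spanned by $E_{jl}-E_{lj}$ and $i(E_{jl}+E_{lj})$ is an orthogonal $\operatorname{Ad}(T)$-invariant plane on which $\operatorname{Ad}(t^{-1})$ acts as a rotation by $\theta_l-\theta_j$; subtracting the identity gives determinant $2-2\cos(\theta_j-\theta_l)=|e^{i\theta_j}-e^{i\theta_l}|^{2}$ there, and multiplying over all $j<l$ yields $J_q(t)=\prod_{1\le j<l\le d}|e^{i\theta_j}-e^{i\theta_l}|^{2}$.

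Finally I would assemble the pieces. Since $q$ is a $d!$-to-one local diffeomorphism between the equidimensional Riemannian manifolds $\mathsf{U}(d)/T\times T$ (product metric) and $\mathsf{U}(d)$ off a null set, the change-of-variables formula gives, for $F\in L^1(\mathsf{U}(d))$,
\[
\int_{\mathsf{U}(d)}F\,\mathrm dm=\frac{1}{d!}\int_{T}\!\int_{\mathsf{U}(d)/T}F(gtg^{-1})\,J_q(t)\,\mathrm d\mu(gT)\,\mathrm d\nu(t).
\]
When $F$ is a class function the integrand does not depend on $gT$, so the inner integral contributes only the total mass $\mu(\mathsf{U}(d)/T)$; normalising $m$ to total mass $1$ and using that $T$, with the metric inherited from $\mathfrak{t}$, is the flat torus $(\mathbb{R}/2\pi\mathbb{Z})^{d}$ of total volume $(2\pi)^{d}$, together with the fact that $T\hookrightarrow\mathsf{U}(d)\to\mathsf{U}(d)/T$ is a Riemannian submersion (so the volumes multiply), collapses all the constants to the asserted $\tfrac{1}{d!(2\pi)^{d}}$, with $\mathrm d\nu$ becoming $\mathrm d\theta_1\cdots\mathrm d\theta_d$. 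The geometric core — that the Jacobian of conjugation is the Weyl denominator — is short; the main obstacle is the bookkeeping of normalisations, i.e.\ making the Riemannian volumes on $\mathsf{U}(d)$, $\mathsf{U}(d)/T$ and $T$ mutually compatible so that no spurious constant survives, and making sure the eigenvalue count for $\operatorname{Ad}(t^{-1})-\mathrm{id}$ is carried out with respect to an orthonormal basis of $\mathfrak{t}^{\perp}$.
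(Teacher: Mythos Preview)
Your sketch is the standard proof of Weyl's integration formula and is essentially correct: the conjugation map $q\colon\mathsf{U}(d)/T\times T\to\mathsf{U}(d)$ is a $d!$-fold cover over the regular set, the Jacobian on each root plane is $2-2\cos(\theta_j-\theta_l)=|e^{i\theta_j}-e^{i\theta_l}|^{2}$, and the volume bookkeeping via the Riemannian submersion $T\hookrightarrow\mathsf{U}(d)\to\mathsf{U}(d)/T$ produces the constant $1/(d!(2\pi)^{d})$.

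However, you should be aware that the paper does not prove this statement at all. It is quoted verbatim as a classical theorem with a reference to Weyl's book \emph{The Classical Groups} (Theorem 7.4.B), and is then used as a black box to compute $m(C_2)$, $\langle H^{\operatorname{dyn}}(U)\rangle_{\mathsf{U}(2)}$, and $m(C_3)$. So there is no ``paper's own proof'' to compare against; your write-up supplies an argument where the authors simply cite the literature. If you intend to include a proof, what you have is fine, modulo tightening the normalisation paragraph (in particular, making explicit that $\operatorname{vol}(\mathsf{U}(d)/T)=\operatorname{vol}(\mathsf{U}(d))/(2\pi)^{d}$ so that dividing by $\operatorname{vol}(\mathsf{U}(d))$ leaves exactly $1/(d!(2\pi)^{d})$).
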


Applying this formula, we get

\begin{theorem}
\label{C2}Let $C_{2}:=\left\{  U\in\mathsf{U}(2):U\text{ is chaotic}\right\}
$. Then
\[
m\left(  C_{2}\right)  =\frac{1}{2}+\frac{1}{\pi}\approx0.8183\text{.}%
\]
\end{theorem}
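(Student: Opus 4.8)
The plan is to apply Weyl's integration formula to the class function $F = \mathbf{1}_{C_2}$, the indicator of the set of chaotic unitaries in $\mathsf{U}(2)$. By Proposition~\ref{entdyn2} (or equivalently Corollary~\ref{trace} together with the sufficiency claim announced for $d=2$), a unitary $U \in \mathsf{U}(2)$ with eigenvalues $e^{i\theta_1}, e^{i\theta_2}$ is chaotic precisely when the angular separation $\theta := \min(|\theta_1-\theta_2|, 2\pi - |\theta_1-\theta_2|)$ satisfies $\theta \geq \pi/2$. So the corresponding $f(\theta_1,\theta_2)$ is the indicator of the region $\{(\theta_1,\theta_2) \in [0,2\pi)^2 : \pi/2 \leq |\theta_1-\theta_2| \leq 3\pi/2\}$.

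First I would write out the Weyl formula for $d=2$ explicitly:
\[
m(C_2) = \frac{1}{2!\,(2\pi)^2} \int_{[0,2\pi)^2} \mathbf{1}_{\{\theta \geq \pi/2\}}\,\bigl|e^{i\theta_1} - e^{i\theta_2}\bigr|^2 \, d\theta_1\, d\theta_2.
\]
Next I would simplify the Vandermonde weight: $|e^{i\theta_1}-e^{i\theta_2}|^2 = 2 - 2\cos(\theta_1-\theta_2) = 4\sin^2\!\bigl(\tfrac{\theta_1-\theta_2}{2}\bigr)$. The integrand depends only on the difference $\alpha := \theta_1 - \theta_2$, so I would change variables (or just integrate out the sum variable), reducing to a one-dimensional integral over $\alpha$. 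Accounting for the $2\pi$-periodicity and the Jacobian, this gives something like
\[
m(C_2) = \frac{1}{8\pi^2}\int_0^{2\pi} 4\sin^2\!\Bigl(\frac{\alpha}{2}\Bigr)\,\mathbf{1}_{\{\theta(\alpha)\geq \pi/2\}}\,(2\pi)\,d\alpha,
\]
where $\theta(\alpha) = \min(\alpha, 2\pi-\alpha)$ for $\alpha \in [0,2\pi)$; the condition $\theta(\alpha) \geq \pi/2$ carves out $\alpha \in [\pi/2, 3\pi/2]$.

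Then the computation is a routine trigonometric integral: $\int_{\pi/2}^{3\pi/2} 2\sin^2(\alpha/2)\,d\alpha = \int_{\pi/2}^{3\pi/2}(1-\cos\alpha)\,d\alpha = \pi - [\sin\alpha]_{\pi/2}^{3\pi/2} = \pi - (-1 - 1) = \pi + 2$. Plugging back in with the correct normalization constants should yield $m(C_2) = \tfrac{1}{2} + \tfrac{1}{\pi}$; I would double-check the constants by verifying that the same setup with the trivial integrand gives $m(\mathsf{U}(2)) = 1$. The main obstacle here is purely bookkeeping: getting the normalization factors from Weyl's formula exactly right (the $1/(d!(2\pi)^d)$, the Jacobian of the change of variables to the difference coordinate, and the factor from integrating out the free angle), since an error in any of these will throw off the final constant. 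There is no conceptual difficulty — once the characterization $\theta \geq \pi/2 \iff$ chaotic is in hand, everything reduces to a single elementary integral.
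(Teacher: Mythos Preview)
Your proposal is correct and follows essentially the same route as the paper: both apply Weyl's integration formula to $\mathbf{1}_{C_2}$, use the characterization that $U$ is chaotic iff the eigenvalue separation satisfies $\theta\ge\pi/2$, and reduce to the one-dimensional integral $\frac{1}{2\pi}\int_{\pi/2}^{3\pi/2}(1-\cos\varphi)\,d\varphi=\frac{1}{2}+\frac{1}{\pi}$. The only cosmetic difference is that the paper skips the explicit change of variables and writes the one-variable integral directly (effectively fixing one eigenvalue at $1$), whereas you integrate out the free angle by hand; the computations are otherwise identical.
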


\begin{proof}
It follows from the Weyl integration formula that
\begin{align*}
m\left(  C_{2}\right)   &  =\int_{\mathsf{U}(2)}\mathbf{1}_{C_{2}}(U)dm(U)\\
&  =\frac{1}{4\pi}\int_{\pi/2}^{3\pi/2}|e^{i\varphi}-1|^{2}d\varphi\\
&  =\frac{1}{2\pi}\int_{\pi/2}^{3\pi/2}\left(  1-\cos\varphi\right)
d\varphi=\frac{1}{2}+\frac{1}{\pi}\text{,}%
\end{align*}
as desired.
\end{proof}
We show that the average entropy is in this case not far from its maximal value
${\ln2\approx0.693}$.
\begin{theorem}
The average value of the PVM-dynamical entropy is given by{\
\[
\left\langle H^{\operatorname*{dyn}}(U)\right\rangle _{\mathsf{U}(2)}=\frac
{3}{2}\ln2-\frac{1}{2}-\frac{1}{2\pi}+\frac{C}{\pi}\approx0.672\text{,}%
\]
}where $C$ is Catalan's constant, which may be computed from the formula
\[
C:=\sum_{n=0}^{\infty}\frac{(-1)^{n}}{(2n+1)^{2}}\approx0.916\text{.}%
\]
\end{theorem}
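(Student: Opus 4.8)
The plan is to compute $\langle H^{\operatorname*{dyn}}(U)\rangle_{\mathsf{U}(2)}$ by feeding the explicit formula \eqref{fordim2} for $H^{\operatorname*{dyn}}$ into the Weyl integration formula for $\mathsf{U}(2)$. Since $H^{\operatorname*{dyn}}(U)$ depends only on $\theta := \min(|\varphi-\psi|, 2\pi-|\varphi-\psi|) \in [0,\pi]$, it is a class function, so by Weyl's formula with $d=2$,
\[
\langle H^{\operatorname*{dyn}}(U)\rangle_{\mathsf{U}(2)} = \frac{1}{2(2\pi)^{2}}\int_{[0,2\pi)^{2}} g(\theta_{1},\theta_{2})\,|e^{i\theta_{1}}-e^{i\theta_{2}}|^{2}\,d\theta_{1}\,d\theta_{2},
\]
where $g$ is the right-hand side of \eqref{fordim2} expressed through $\theta_{1},\theta_{2}$. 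Using $|e^{i\theta_{1}}-e^{i\theta_{2}}|^{2} = 2(1-\cos(\theta_{1}-\theta_{2}))$ and changing variables to $\theta = |\theta_{1}-\theta_{2}|$ reduced to $[0,\pi]$, the double integral collapses (as in the proof of Theorem~\ref{C2}) to a one-dimensional integral: essentially
\[
\langle H^{\operatorname*{dyn}}(U)\rangle_{\mathsf{U}(2)} = \frac{1}{\pi}\int_{0}^{\pi} H^{\operatorname*{dyn}}(\theta)\,(1-\cos\theta)\,d\theta,
\]
after accounting for the normalization and the symmetry $\theta \leftrightarrow \theta_{1}-\theta_{2}$ on the torus.

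Next I would split the integral at $\theta = \pi/2$ according to the two branches of \eqref{fordim2}. On $[\pi/2,\pi]$ the integrand is simply $\ln 2 \cdot (1-\cos\theta)$, which integrates elementarily and contributes a term of the form $a\ln 2 + b$ with $a,b$ rational multiples of $1$ and $1/\pi$ — this is where the $\frac{1}{2} + \frac{1}{\pi}$ piece (the measure of $C_{2}$, times $\ln 2$) reappears. On $[0,\pi/2]$ the integrand is $\bigl(\eta(\cos^{2}(\theta/2)) + \eta(\sin^{2}(\theta/2))\bigr)(1-\cos\theta)$. Writing $c := \sin^{2}(\theta/2)$, so that $1-\cos\theta = 2c$ and $dc = \tfrac12\sin\theta\,d\theta$, hence $d\theta = dc/\sqrt{c(1-c)}$, this branch becomes an integral over $c \in [0,1/2]$ of $2c\bigl(\eta(c)+\eta(1-c)\bigr)/\sqrt{c(1-c)}$. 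The resulting integrals involve $\int \sqrt{c(1-c)}\ln c\,dc$-type expressions; the trigonometric substitution $c = \sin^{2}(\theta/2)$ combined with standard reduction turns the logarithmic pieces into integrals of $\ln(\sin)$ and $\ln(\cos)$ against trigonometric polynomials over $[0,\pi/2]$, and it is precisely these that produce Catalan's constant $C = \int_{0}^{\pi/2} (\text{something}) $ — concretely, $C$ arises from $\int_{0}^{\pi/2}\theta\,\cot\theta\,d\theta$ or equivalently from $-\int_0^{\pi/2}\ln(\tan(\theta/2))\,d\theta$ type identities after integration by parts.

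The main obstacle will be the bookkeeping of the $[0,\pi/2]$-branch: one must carefully carry through the substitution $c = \sin^{2}(\theta/2)$, use $\eta(c) + \eta(1-c) = -c\ln c - (1-c)\ln(1-c)$, expand $1-\cos\theta$, and then integrate by parts to move the logarithm off and expose the $\int \theta\cot\theta$-type integral that yields $C$. Several spurious rational and $\ln 2$ contributions from the boundary terms at $\theta = 0$ and $\theta = \pi/2$ must be tracked and combined with the contribution of the upper branch; getting the coefficients to collapse to exactly $\tfrac{3}{2}\ln 2 - \tfrac12 - \tfrac{1}{2\pi} + \tfrac{C}{\pi}$ is the delicate part. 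Once the antiderivatives are in hand, the final assembly is routine arithmetic, and the numerical check $\approx 0.672$ against $\ln 2 \approx 0.693$ confirms the computation.
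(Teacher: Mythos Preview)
Your plan is correct and essentially the same as the paper's: reduce via Weyl's formula to $\frac{1}{\pi}\int_0^\pi H^{\operatorname{dyn}}(\theta)(1-\cos\theta)\,d\theta$, split at $\pi/2$ (the upper branch contributes $(\tfrac12+\tfrac1\pi)\ln 2$), and evaluate the lower branch to extract Catalan's constant. The only difference is that the paper skips your $c$-substitution-and-back detour and instead uses $\cos^2(\theta/2)=(1+\cos\theta)/2$, $\sin^2(\theta/2)=(1-\cos\theta)/2$ to expand $(\eta(\cos^2(\theta/2))+\eta(\sin^2(\theta/2)))(1-\cos\theta)$ directly into four integrals of $\ln(1\pm\cos\theta)$ weighted by $1$, $\cos\theta$, $\cos^2\theta$ over $[0,\pi/2]$, each evaluated via integration by parts, the tangent half-angle substitution, and standard integral representations of $C$ (e.g.\ $\int_0^{\pi/2}\frac{\varphi}{\sin\varphi}\,d\varphi=2C$).
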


\begin{proof}
Using again the Weyl integration formula and (\ref{fordim2}), we get
\begin{align*}
& \left\langle H^{\operatorname*{dyn}}(U)\right\rangle _{\mathsf{U}(2)}\\
&  =\int_{\mathsf{U}(d)}H^{\operatorname*{dyn}}(U)\ dm\left(  U\right) \\
&  =\frac{1}{\pi}\int_{0}^{\pi/2}\left(  \eta\left(  \cos^{2}\left(
\frac{\varphi}{2}\right)  \right)  +\eta\left(  \sin^{2}\left(  \frac{\varphi
}{2}\right)  \right)  \right)  \left(  1-\cos\varphi\right)  d\varphi \\
& +\left(  \frac{1}{2}+\frac{1}{\pi}\right)  \ln2\text{.}%
\end{align*}
The first summand can be written as the sum of several integrals, which gives
\begin{align}
&  \left\langle H^{\operatorname*{dyn}}(U)\right\rangle _{\mathsf{U}%
(2)}\nonumber\\
&  =\ln2+\frac{1}{\pi}\int_{0}^{\pi/2}\cos\varphi\ln\left(  1-\cos
\varphi\right)  d\varphi\nonumber\\
&  -\frac{1}{2\pi}\int_{0}^{\pi/2}\ln\left(  1+\cos\varphi\right)
\,d\varphi-\frac{1}{2\pi}\int_{0}^{\pi/2}\ln\left(  1-\cos\varphi\right)d\varphi\nonumber\\
&  +\frac{1}{2\pi}\int_{0}^{\pi/2}\cos^{2}\varphi\ln\left(  \frac
{1+\cos\varphi}{1-\cos\varphi}\right)  d\varphi\text{.} \label{a1}%
\end{align}
Firstly, integrating by parts, we get
\begin{equation}
\int_{0}^{\frac{\pi}{2}}\cos\varphi\ln\left(  1-\cos\varphi\right)
d\varphi=-\left(  1+\frac{\pi}{2}\right)  \text{.} \label{a2}%
\end{equation}
\noindent
In the following calculations we use various integral representations of
Catalan's constant, which can be found in \cite{Bra01}. Using the tangent
half-angle substitution $x=\tan\left(  \varphi/2\right)  $ and formula (23)
from \cite{Bra01}, we obtain
\begin{align}
\int_{0}^{\frac{\pi}{2}}\ln\left(  1+\cos\varphi\right)  d\varphi &  =\int
_{0}^{1}\frac{2}{1+x^{2}}\ln\left(  \frac{2}{1+x^{2}}\right)  dx\nonumber\\
&  =\frac{\pi}{2}\ln2-2\int_{0}^{1}\frac{\ln\left(  1+x^{2}\right)  }{1+x^{2}%
}\,dx\nonumber\\
&  =-\frac{\pi}{2}\ln2+2C\text{.} \label{a3}%
\end{align}
\smallskip
From this equality and formula (10) from \cite{Bra01} we get
\begin{equation}
\int_{0}^{\frac{\pi}{2}}\ln\left(  1-\cos\varphi\right)  d\varphi=-\frac{\pi
}{2}\ln2-2C\text{.} \label{a4}%
\end{equation}
Finally, integrating by parts and using formula (4) from \cite{Bra01}, we
have
\smallskip
\begin{equation}
\int_{0}^{\frac{\pi}{2}}\cos^{2}\varphi\ln\left(  \frac{1+\cos\varphi}%
{1-\cos\varphi}\right)  d\varphi=1+\int_{0}^{\frac{\pi}{2}}\frac{\varphi}%
{\sin\varphi}d\varphi=1+2C\text{.} \label{a5}%
\end{equation}
\smallskip
Now, combining (\ref{a1}), (\ref{a2}), (\ref{a3}), (\ref{a4}) and (\ref{a5}),
we obtain
\[
\left\langle H^{\operatorname*{dyn}}(U)\right\rangle _{\mathsf{U}(2)}=\frac
{3}{2}\ln2+\frac{2C-\pi-1}{2\pi}\approx0.672\text{.}%
\]
\end{proof}

\section{PVM-dynamical entropy: qutrits and beyond\label{QUTRITS}}

To determine whether or not a given unitary $U$ belongs to
$C_{d}=\left\{  U\in\mathsf{U}(d):U\text{ is chaotic}\right\}  $, one has to
know its spectrum lying on the unit circle and defined up to a phase factor.
We can, because of this overall phase freedom, restrict our attention to the
set of special unitary matrices and assume that $U\in\mathsf{SU}(d)$. It is
well known that all possible values of the trace of matrices from
$\mathsf{SU}(d)$ fill in the region $T_{d}:=\left\{  \operatorname{tr}%
U:U\in\mathsf{SU}(d)\right\}  $ in the complex plane bounded by a
$d$-hypocycloid with cusps at $d$-th roots of unity scaled up by $d$, i.e.,
the curve produced by a point on the circumference of a small circle of radius
$1$ rolling around the inside of a large circle of radius $d$ and starting at
$(d,0)$ \cite[Theorem 5.2]{Chaetal05}, see also \cite{Kai06}. It follows from
Corollary \ref{trace} that $CT_{d}:=\left\{  \operatorname{tr}U:U\in
\mathsf{SU}(d)\text{, }U\text{ is chaotic}\right\}  $, i.e., the image of the
set of special chaotic matrices under the trace map, is contained in the ball
$B(0,\sqrt{d})$. We shall see that $CT_{d}$ is the subset of $T_{d}\cap
B(0,\sqrt{d})$ (the latter is just $B(0,\sqrt{d})$ for $d\geq4$) given by the union
of regions indexed by pairs consisting of a complex Hadamard matrix of order
$d$ and a permutation of a $d$-element set. Each of these regions is the image
of $T_{d}$ under a spiral similarity with centre at $0$, ratio $1/\sqrt{d}$,
and angle of rotation that depends on the index. Namely, for a given pair
$(H,\sigma)$ we consider the Leibniz formula for the determinant of $H$. A
$d$-th root of the normalized summand in this formula corresponding to $\sigma$
is equal to the complex multiplier defining the spiral similarity.

In fact, it is enough to take here `benchmark' Hadamard matrices defined in the
following way. Denote by $\mathsf{H}_{d}$ the set of all complex Hadamard
matrices of order $d$. We call \mbox{$\mathsf{B}\subset\mathsf{H}_{d}$} a
\textsl{benchmark set} if every $H\in\mathsf{H}_{d}$ is \textsl{equivalent} to
some matrix $F$ in $\mathsf{B}$, i.e., it is of the form $H=D_{1}P_{1}%
FP_{2}D_{2}$, where $D_{1},D_{2}$ are diagonal unitary matrices and
$P_{1},P_{2}$ are permutation matrices. We have

\begin{theorem}
\label{necess}Let $\mathsf{B}\subset\mathsf{H}_{d}$ be a benchmark set. Then
\begin{align}
CT_{d}  &  =
{\textstyle\bigcup}
\left\{  \alpha_{F,\sigma}T_{d}:F\in\mathsf{B},\sigma\in S_{d}\right\}
\nonumber\\
& =
{\textstyle\bigcup}
\left\{  \alpha_{F,\sigma}T_{d}:F\in\mathsf{H}_{d},\sigma\in S_{d}\right\}
\text{,} \label{hyper}%
\end{align}
where for $F\in\mathsf{H}_{d},\sigma\in S_{d}$ we take $\alpha_{F,\sigma}$ to
be any $d$-th root of $\left(  \det F\right)  ^{-1}\left(  \operatorname*{sgn}%
(\sigma)\right)
{\textstyle\prod_{j=1}^{d}}
F_{j,\sigma\left(  j\right)  }$ (and so $|\alpha_{F,\sigma}|=1/\sqrt{d}$).
\end{theorem}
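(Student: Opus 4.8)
The plan is to characterize $CT_d$ by tracking what the chaoticity condition from Proposition~\ref{Hadamard}(v) says about the spectrum, hence about the trace. Fix $U \in \mathsf{SU}(d)$. By Proposition~\ref{Hadamard}, $U$ is chaotic if and only if there is an orthonormal basis $\{e_j\}$ in which $\sqrt{d}\,U$ is a complex Hadamard matrix $H$; equivalently, $U = \frac{1}{\sqrt d}\,V H V^{\ast}$ for some $V \in \mathsf{U}(d)$ and some $H \in \mathsf{H}_d$. Since the trace is conjugation-invariant, $\operatorname{tr} U = \frac{1}{\sqrt d}\operatorname{tr} H$, so $CT_d = \frac{1}{\sqrt d}\{\operatorname{tr} H : H \in \mathsf{H}_d,\ \tfrac{1}{\sqrt d}H \in \mathsf{SU}(d)\}$. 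The determinant constraint $\det(\frac{1}{\sqrt d}H) = 1$ means $\det H = d^{d/2}$, but this is just a phase normalization: given \emph{any} $H \in \mathsf{H}_d$ we may multiply one row by a unit scalar to rescale $\det H$ to have any prescribed argument (and $|\det H| = d^{d/2}$ automatically since $H/\sqrt d$ is unitary), without changing the Hadamard property. So the real content is: which values $\operatorname{tr}(\frac{1}{\sqrt d}H)$ arise as $H$ ranges over $\mathsf{H}_d$ with the determinant phase fixed?

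The key move is to bring in the spectral/trace region $T_d$. Write $W := \frac{1}{\sqrt d}H$, a unitary matrix with $\det W = 1$. The spectrum of $W$ is some $d$-element multiset on $S^1$ with product $1$, and conversely, \emph{every} special unitary $W$ is conjugate to a diagonal one; but not every special unitary arises as a rescaled Hadamard. The link is the following: given $F \in \mathsf{B}$ and $\sigma \in S_d$, the diagonal of $F$ (after the equivalence $H = D_1 P_1 F P_2 D_2$ and the permutation bookkeeping) can be adjusted by the free diagonal unitaries $D_1, D_2$ and permutation matrices $P_1, P_2$, and these moves realize exactly a conjugation (up to the overall determinant phase $\alpha_{F,\sigma}$) that sends $F$ into a matrix whose spectrum can be \emph{any} element of the hypocycloid region — because the set of special unitaries with a given determinant that are $\mathsf{U}(d)$-conjugate to a rescaled Hadamard, intersected with the diagonal torus, is precisely $\{\operatorname{diag}(e^{i\theta_1},\dots,e^{i\theta_d}) : \sum \theta_j \equiv 0\}$ constrained only by the Horn-type condition that its trace lies in $T_d$ (Theorem~5.2 of \cite{Chaetal05}). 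Concretely: for fixed $(F,\sigma)$, the matrices $D_1 P_1 F P_2 D_2$ that remain scalar multiples of special unitary with the phase pinned by $\alpha_{F,\sigma}$ have traces filling exactly $\alpha_{F,\sigma} T_d$; this is where the spiral similarity (centre $0$, ratio $1/\sqrt d$, rotation $\arg \alpha_{F,\sigma}$) enters, the ratio $1/\sqrt d$ being the $H \mapsto H/\sqrt d$ rescaling and the rotation absorbing the determinant-phase normalization forced by the Leibniz summand $\operatorname{sgn}(\sigma)\prod_j F_{j,\sigma(j)}$.

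I would carry this out in three steps. First, reduce to benchmark matrices: show $\bigcup\{\alpha_{F,\sigma}T_d : F \in \mathsf{B}\} = \bigcup\{\alpha_{F,\sigma}T_d : F \in \mathsf{H}_d\}$ by checking that the equivalence $H = D_1 P_1 F P_2 D_2$ changes $\alpha_{H,\sigma}$ only into another $\alpha_{F,\sigma'}$ (the permutations relabel $\sigma$, the diagonals contribute a $d$-th-root-of-unity-times-unit factor that is already absorbed in "any $d$-th root"), so the two unions coincide; this is the easy equality in (\ref{hyper}). Second, the inclusion $CT_d \subseteq \bigcup \alpha_{F,\sigma}T_d$: take chaotic $U$, write $\sqrt d\,U$ as a Hadamard $H$ in some basis, expand $\det H$ by the Leibniz formula, pick the term indexed by some $\sigma$, and verify that the conjugacy class of $U$ forces $\operatorname{tr} U = \alpha_{F,\sigma}\cdot(\text{something in }T_d)$ — because after factoring out the phase $\alpha_{F,\sigma}$, what remains is the trace of a genuine element of $\mathsf{SU}(d)$, hence lies in $T_d$ by \cite{Chaetal05}. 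Third, the reverse inclusion: given $t \in \alpha_{F,\sigma}T_d$, use the characterization of $T_d$ to produce a special unitary $W$ with $\operatorname{tr} W = t/\alpha_{F,\sigma}$ that is simultaneously $\mathsf{U}(d)$-conjugate to $F/\sqrt d$ up to phase — this requires knowing that the $\mathsf{U}(d)$-orbit of a rescaled Hadamard meets the relevant level set of the trace, which follows because that orbit is the full set of unitaries with the prescribed spectral multiset and Hadamards of every spectrum in the hypocycloid range exist once we are allowed to conjugate. The main obstacle is precisely this last point — arguing that for every target spectrum (equivalently every trace value) in $T_d$ there is a Hadamard matrix with that spectrum after suitable conjugation and phase adjustment; this is the step where the geometry of $T_d$ (Theorem~5.2 of \cite{Chaetal05}) must be married to the existence theory of complex Hadamard matrices, and where one must be careful that the $d$-th-root ambiguity in $\alpha_{F,\sigma}$ exactly matches the $\mathbb{Z}_d$ ambiguity in choosing which cusp of the hypocycloid the construction starts from.
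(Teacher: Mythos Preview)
Your overall architecture (two inclusions, with the benchmark/full-$\mathsf{H}_d$ equality caught in the sandwich) matches the paper's, but your third step contains a genuine confusion that would block the argument. You frame the reverse inclusion as needing a special unitary $W$ that is ``$\mathsf{U}(d)$-conjugate to $F/\sqrt d$ up to phase'' and whose trace is the prescribed value $t/\alpha_{F,\sigma}$. But unitary conjugation preserves spectrum and hence trace, so conjugates of a \emph{fixed} $F/\sqrt d$ have only one trace value; you cannot reach an arbitrary point of $T_d$ that way. What actually varies the trace is the Hadamard \emph{equivalence} operation---left/right multiplication by diagonal unitaries and permutations---not conjugation. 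The paper's construction is entirely elementary and avoids any ``existence theory'' or Horn-type argument: given $\lambda\in T_d$, pick unimodular $\kappa_1,\dots,\kappa_d$ with $\sum\kappa_j=\lambda$ and $\prod\kappa_j=1$, set $\lambda_j:=\alpha_{F,\sigma}\kappa_j\overline{F_{j,\sigma(j)}}$, and form $D':=\operatorname{diag}(\lambda_j)\,F\,P_\sigma$. Then $\sqrt d\,D'$ is Hadamard (diagonals and permutations preserve $\mathsf{H}_d$), $\det D'=1$ by the very definition of $\alpha_{F,\sigma}$, and $\operatorname{tr}D'=\sum_j\lambda_jF_{j,\sigma(j)}=\alpha_{F,\sigma}\lambda$. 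No spectral matching is needed; the diagonal freedom alone hits every target trace.

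A smaller point: your step~1 (proving $\bigcup_{\mathsf{B}}=\bigcup_{\mathsf{H}_d}$ directly) is unnecessary, since the paper gets it for free from $CT_d\subseteq\bigcup_{\mathsf{B}}\subseteq\bigcup_{\mathsf{H}_d}\subseteq CT_d$. And in step~2, the permutation $\sigma$ does not come from ``picking a Leibniz term''---it is forced by the decomposition $H=D_1P_{\sigma_1}FP_{\sigma_2}D_2$ as $\sigma=\sigma_2\circ\sigma_1$; the trace computation then reads $\operatorname{tr}H=\sum_j\lambda_{\sigma_1(j)}F_{j,\sigma(j)}$, and the paper packages this as $\sqrt d\,\alpha_{F,\sigma}\operatorname{tr}D'$ for an explicit diagonal $D'\in\mathsf{SU}(d)$.
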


\begin{proof}
Let $U\in\mathsf{SU}(d)\cap C_{d}$. It follows from Proposition \ref{Hadamard}
that $U$ is represented in some orthonormal basis by $H\in\mathsf{H}_{d}$
rescaled by the factor $1/\sqrt{d}$. Fix this basis. Then one can find
$F\in\mathsf{B}$, diagonal unitary matrices $D_{1},D_{2}$ and permutation
matrices $P_{\sigma_{r}}$ corresponding to $\sigma_{r}\in S_{d}$ ($r=1,2$)
such that $H=D_{1}P_{\sigma_{1}}FP_{\sigma_{2}}D_{2}$. Put $D:=D_{2}D_{1}$ and $\sigma
:=\sigma_{2}\circ\sigma_{1}$. Observe that $\operatorname*{sgn}\left(
\sigma\right)  =\det\left( P_{\sigma_{2}}P_{\sigma_{1}}\right)  $. Moreover, $d^{d/2}=\det
H=\det\left(  P_{\sigma_{2}}P_{\sigma_{1}}\right)  \det D\det F$. Let $\lambda_{j}\in\mathbb{C}%
$, $\left|  \lambda_{j}\right| =1$ ($j=1,\ldots,d$) stand for the diagonal
elements of $D$. Set
\[
D^{\prime}:=d^{1/2}\overline{\alpha_{F,\sigma}}\operatorname*{diag}\left(
\lambda_{\sigma_{1}(j)}F_{j,\sigma\left(  j\right)  }\right)  _{j=1}
^{d}\text{.}
\]
Then $D^{\prime}$ is a unitary matrix as $\left|  \alpha_{F,\sigma}\right|
=1/\sqrt{d}$. We have
\begin{align*}
\det D^{\prime}  &  =d^{d/2}(\overline{\alpha_{F,\sigma}})^{d}
{\textstyle\prod_{j=1}^{d}}
\lambda_{\sigma_{1}(j)}F_{j,\sigma\left(  j\right)  }\\
&  =d^{d/2}(\overline{\alpha_{F,\sigma}})^{d}\left(  \det D\right)
{\textstyle\prod_{j=1}^{d}}
F_{j,\sigma\left(  j\right)  }\\
&  =d^{d}(\overline{\alpha_{F,\sigma}})^{d}\left(  \det F\right)
^{-1}\operatorname*{sgn}\left(  \sigma\right)
{\textstyle\prod_{j=1}^{d}}
F_{j,\sigma\left(  j\right)  }\\
&  =d^{d}|\alpha_{F,\sigma}|^{2d}=1\text{.}
\end{align*}
Hence, $\operatorname{tr}D^{\prime}\in T_{d}$. Moreover,
\begin{align*}
\sqrt{d}\operatorname{tr}U  &  =\operatorname{tr}H=\operatorname{tr}%
P_{\sigma_{1}}FP_{\sigma_{2}}D=
{\textstyle\sum_{j=1}^{d}}
\lambda_{j}F_{\sigma_{1}^{-1}(j),\sigma_{2}\left(  j\right)  }
\\
&
 =
{\textstyle\sum_{j=1}^{d}}
\lambda_{\sigma_{1}(j)}F_{j,\sigma(j)}=\sqrt{d}\alpha_{F,\sigma}%
\operatorname{tr}D^{\prime}\text{,}
\end{align*}
and so $\operatorname{tr}U\in\alpha_{F,\sigma}T_{d}$. In this way, we showed
that $CT_{d}\subset
{\textstyle\bigcup}
\left\{  \alpha_{F,\sigma}T_{d}:F\in B,\sigma\in S_{d}\right\}  $.

Now, let $F\in\mathsf{H}_{d}$, $\sigma\in S_{d}$ and $\lambda\in T_{d}$. Then
there is a unitary $U\in\mathsf{SU}(d)$ such that $\operatorname{tr}U=\lambda
$. Fix an eigenbasis of $U$. Then $U$ is represented by a matrix
$\operatorname*{diag}(\kappa_{j})_{j=1}^{d}$, where $\kappa_{j}\in\mathbb{C}$,
$\left|  \kappa_{j}\right|  =1$ ($j=1,\ldots,d$), $
{\textstyle\sum_{j=1}^{d}}
\kappa_{j}=\lambda$ and $
{\textstyle\prod_{j=1}^{d}}
\kappa_{j}=1$. Define $D^{\prime}:=\operatorname*{diag}(\lambda_{j})_{j=1}%
^{d}FP_{\sigma}$ with $\lambda_{j}:=\alpha_{F,\sigma}\kappa_{j}\overline
{F_{j,\sigma\left(  j\right)  }}$ for $j=1,\ldots,d$. Then $\sqrt{d}D^{\prime
}\in\mathsf{H}_{d}$ fulfills
\begin{align*}
\det D^{\prime}  &  =(
{\textstyle\prod_{j=1}^{d}}
\lambda_{j})\operatorname*{sgn}\left(  \sigma\right)  \left(  \det F\right) \\
&  =\alpha_{F,\sigma}^{d}\operatorname*{sgn}\left(  \sigma\right)  \left(
\det F\right)
{\textstyle\prod_{j=1}^{d}}
\overline{F_{j,\sigma\left(  j\right)  }}=1
\end{align*}
and $\operatorname{tr}D^{\prime}=
{\textstyle\sum_{j=1}^{d}}
\lambda_{j}F_{j,\sigma\left(  j\right)  }=\alpha_{F,\sigma}\lambda$. Thus,
$D^{\prime}$ represents, by Proposition \ref{Hadamard}, a chaotic $U^{\prime
}\in\mathsf{SU}(d)$ such that $\operatorname{tr}U^{\prime}=\alpha_{F,\sigma
}\lambda$. Hence, $\alpha_{F,\sigma}\lambda\in CT_{d}$. In consequence, $
{\textstyle\bigcup}
\left\{  \alpha_{F,\sigma}T_{d}:F\in\mathsf{H}_{d},\sigma\in S_{d}\right\}
\subset CT_{d} $, which completes the proof.
\end{proof}

This theorem gives us another characterization of the set of chaotic unitaries
for $d=2$. In this case, since the Fourier matrix $F_{2}$, where
\[
F_{2}=\left[
\begin{array}
[c]{cc}%
1 & 1\\
1 & -1
\end{array}
\right]  \text{,}
\]
serves as the only benchmark Hadamard matrix, we get at once
$CT_{2}=\left\{  x\in\mathbb{R}:\left|  x\right|  \leq\sqrt{2}\right\}
=(1/\sqrt{2})\left\{  x\in\mathbb{R}:\left|  x\right|  \leq2\right\}
=(1/\sqrt{2})T_{2}$. Hence, we obtain the following simple result, which can
also be easily deduced from (\ref{fordim2}).

\begin{proposition}
\label{chaotic2}Let $U\in\mathsf{U}(2)$. Then $U$ is chaotic if and only if
$\left| \operatorname{tr}U\right|  \leq\sqrt{2}$.
\end{proposition}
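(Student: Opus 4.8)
The statement splits into two implications, and both are already within reach of the tools assembled above. For the ``only if'' direction there is nothing to do beyond quoting Corollary~\ref{trace} with $d=2$: any chaotic $U\in\mathsf{U}(2)$ satisfies $|\operatorname{tr}U|\le\sqrt2$.

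The content is in the converse. My plan is to reduce chaoticity to a condition on the angle between the eigenvalues of $U$ and then translate that angle condition into one on $|\operatorname{tr}U|$. Concretely, write the spectrum of $U$ as $\{e^{i\varphi},e^{i\psi}\}$ and set $\theta:=\min(|\varphi-\psi|,2\pi-|\varphi-\psi|)\in[0,\pi]$, exactly the parameter appearing in Proposition~\ref{entdyn2}. A direct computation gives $|\operatorname{tr}U|^2=|e^{i\varphi}+e^{i\psi}|^2=2+2\cos\theta=4\cos^2(\theta/2)$, and since $\theta/2\in[0,\pi/2]$ this yields $|\operatorname{tr}U|=2\cos(\theta/2)$. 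Hence $|\operatorname{tr}U|\le\sqrt2$ is equivalent to $\cos(\theta/2)\le 1/\sqrt2$, i.e.\ to $\theta\ge\pi/2$. On the other hand, Proposition~\ref{entdyn2} (together with the fact that $\eta(\cos^2(\theta/2))+\eta(\sin^2(\theta/2))$ equals $\ln2$ only at $\theta=\pi/2$) says precisely that $H^{\operatorname*{dyn}}(U)=\ln2$, i.e.\ $U$ is chaotic, if and only if $\theta\ge\pi/2$. Chaining the two equivalences finishes the proof.

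As an alternative one can bypass the trigonometry and invoke Theorem~\ref{necess} directly: after multiplying $U$ by a suitable phase to land in $\mathsf{SU}(2)$, its trace becomes real and, by the hypothesis, lies in $CT_2=[-\sqrt2,\sqrt2]$; matching spectra of special unitaries together with the conjugation- and phase-invariance of $H^{\operatorname*{dyn}}$ then shows $U$ is chaotic. I do not expect a genuine obstacle in either route; the only points needing a little care are the bookkeeping of the overall phase in the $\mathsf{SU}(2)$ reduction (so that neither $H^{\operatorname*{dyn}}$ nor $|\operatorname{tr}U|$ is affected) and, in the trigonometric route, remembering that $\theta\in[0,\pi]$ so that $\cos(\theta/2)\ge0$ and the square root may be extracted without an absolute value.
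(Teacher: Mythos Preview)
Your proof is correct. The paper's own derivation of Proposition~\ref{chaotic2} proceeds primarily via Theorem~\ref{necess}: with $F_2$ as the sole benchmark Hadamard matrix one computes $CT_2=(1/\sqrt{2})T_2=[-\sqrt2,\sqrt2]$, and since in dimension two the spectrum of a special unitary is determined by its (real) trace, this settles both directions; the paper then remarks that the result ``can also be easily deduced from~(\ref{fordim2}).'' You have simply swapped the priorities: your main argument is the trigonometric one through Proposition~\ref{entdyn2} (i.e., formula~(\ref{fordim2})), and you relegate the Theorem~\ref{necess} route to an alternative. Both routes are valid and both are acknowledged by the paper, so there is no substantive divergence. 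Your computation $|\operatorname{tr}U|=2\cos(\theta/2)$ and the observation that the binary entropy in~(\ref{fordim2}) equals $\ln2$ only at $\theta=\pi/2$ are correct, and you handle the phase and sign bookkeeping properly.
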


In the case of qutrits ($d=3$) it follows from Theorem \ref{necess} that
$CT_{3}$, i.e., the image of the set of special chaotic matrices under the
trace map, is the subset of $T_{3}$ given by the union of two regions each of
which is bounded by a $3$-hypocycloid that arises from the original
$3$-hypocycloid (the black curve in Fig. 3) by scaling it down by a factor of
$\sqrt{3}$ and rotating by $\pm\pi/18$ (the union of figures bounded by the
red curves in Fig. 3).

Observe that the characteristic polynomial of $U\in\mathsf{SU}(3)$ takes the
form $\lambda^{3}-\left(  \operatorname{tr}U\right)  \lambda^{2}%
+\overline{\left(  \operatorname{tr}U\right)  }\lambda-1$, so the spectrum of
$U$, and thus the answer to the question whether it is chaotic or not, depends
solely on its trace. Thus, it is not a surprise that in this case the
necessary condition (\ref{hyper}) becomes sufficient as well.
\begin{center}
\includegraphics[scale=0.3]{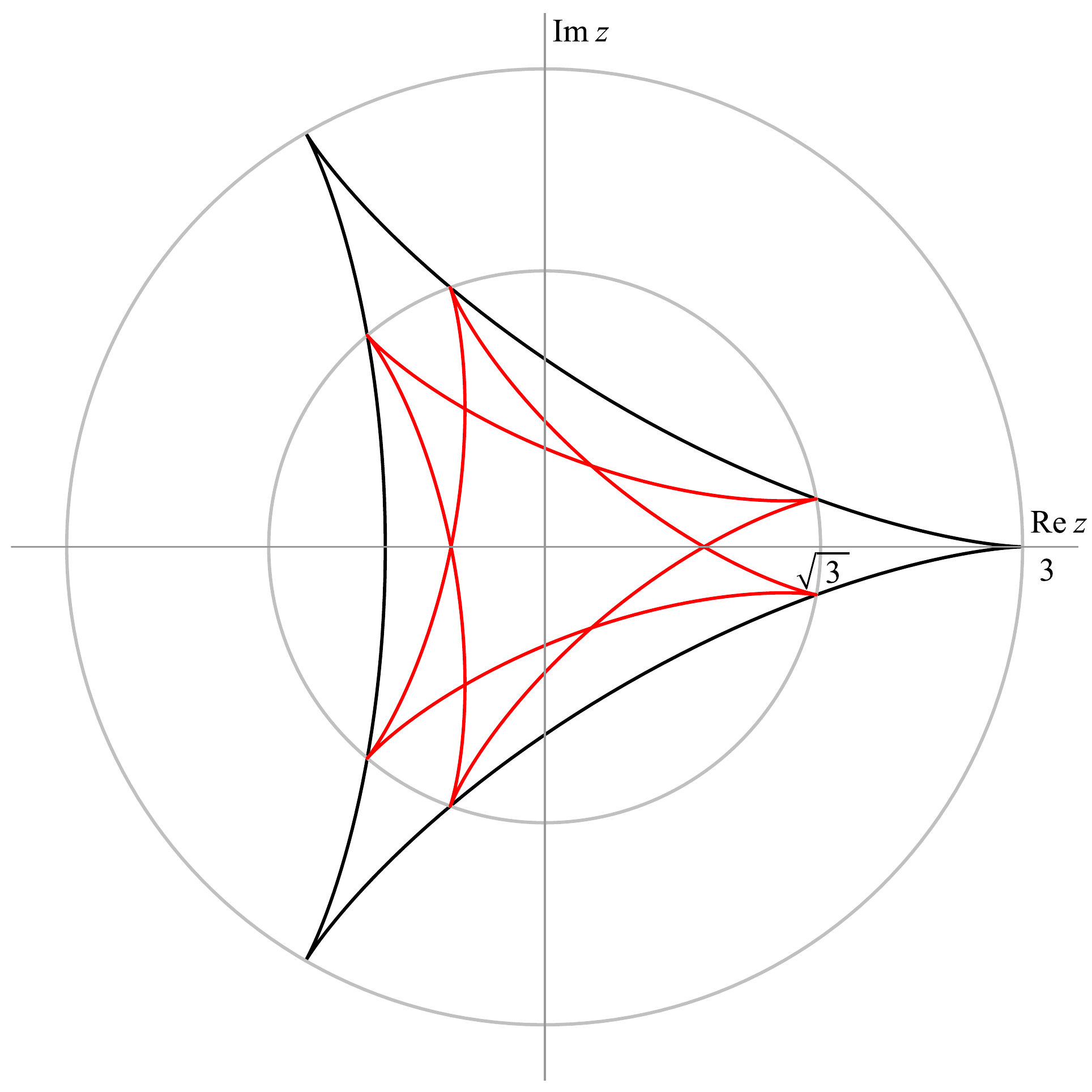}

Fig. 3. Traces of special chaotic unitaries for $d=3$ (the region bounded by
the \textit{red} curves).
\end{center}

\begin{theorem}
\label{chaotic3}Let $U\in\mathsf{U}(3)$ and let $\beta$ be a cube root of
$\det U$. Then $U$ is chaotic iff
\[
\frac{1}{\beta}\operatorname{tr}U\in CT_{3}=\frac{1}{\sqrt{3}}\left(  \alpha
T_{3}\cup\overline{\alpha}T_{3}\right)  \text{,}%
\]
where $\alpha:=e^{\frac{\pi}{18}i}$.
\end{theorem}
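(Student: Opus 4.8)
The plan is to reduce the general statement to Theorem~\ref{necess} combined with the observation made just before it, namely that for $U\in\mathsf{SU}(3)$ the spectrum -- and hence chaoticity -- is a function of $\operatorname{tr}U$ alone. First I would invoke the overall phase freedom (Proposition~\ref{invariance}(iii)): since $U$ is chaotic iff $e^{i\varphi}U$ is, and since $U/\beta\in\mathsf{SU}(3)$ when $\beta$ is a cube root of $\det U$, it suffices to prove the equivalence for $U\in\mathsf{SU}(3)$ with $\beta=1$. The claim then becomes: $U\in\mathsf{SU}(3)$ is chaotic iff $\operatorname{tr}U\in CT_3$, which is precisely the content of the inclusion $CT_3\subseteq T_3$ being an \emph{equality onto its preimage} -- that is, if $\operatorname{tr}U\in CT_3$ for \emph{some} chaotic special unitary with that trace, then \emph{every} special unitary with that trace is chaotic. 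This is where the characteristic-polynomial remark does the work: two matrices in $\mathsf{SU}(3)$ with equal trace have equal characteristic polynomial $\lambda^3-(\operatorname{tr}U)\lambda^2+\overline{(\operatorname{tr}U)}\lambda-1$ (using $\det U=1$ and $\operatorname{tr}U^{-1}=\overline{\operatorname{tr}U}$ for unitaries), hence equal spectrum, hence are unitarily similar; and chaoticity is a unitary class function by Proposition~\ref{invariance}(i) together with Proposition~\ref{Hadamard}.

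Concretely, I would argue as follows. For the forward direction, if $U\in\mathsf{SU}(3)$ is chaotic then $\operatorname{tr}U\in CT_3$ directly by the definition of $CT_3$. For the converse, suppose $\operatorname{tr}U=\lambda\in CT_3$. By Theorem~\ref{necess} there exists a chaotic $U'\in\mathsf{SU}(3)$ with $\operatorname{tr}U'=\lambda$. Since $U,U'\in\mathsf{SU}(3)$ share the same trace, they share the same characteristic polynomial and therefore the same spectrum, so $U'=V^{*}UV$ for some $V\in\mathsf{U}(3)$; by the conjugation invariance of chaoticity, $U$ is chaotic too. This settles the equivalence for special unitaries and, after un-doing the rescaling by $\beta$, for arbitrary $U\in\mathsf{U}(3)$ with $\frac{1}{\beta}\operatorname{tr}U\in CT_3$.

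It remains to identify the explicit description $CT_3=\tfrac{1}{\sqrt{3}}(\alpha T_3\cup\overline{\alpha}T_3)$ with $\alpha=e^{\pi i/18}$. Here I would apply Theorem~\ref{necess} with a benchmark set: it is classical (Haagerup) that every complex Hadamard matrix of order $3$ is equivalent to the Fourier matrix $F_3$, so $\mathsf{B}=\{F_3\}$ works and $CT_3=\bigcup\{\alpha_{F_3,\sigma}T_3:\sigma\in S_3\}$. One then computes the six multipliers $\alpha_{F_3,\sigma}$: each is a cube root of $(\det F_3)^{-1}\operatorname{sgn}(\sigma)\prod_j (F_3)_{j,\sigma(j)}$. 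Using $(F_3)_{jl}=\omega_3^{(j-1)(l-1)}$ with $\omega_3=e^{2\pi i/3}$ and $\det F_3 = -3\sqrt{3}\,i$ (so $(\det F_3)^{-1}$ has argument $\pi/2$ up to the modulus $3\sqrt3$), one finds that $\prod_j(F_3)_{j,\sigma(j)}=\omega_3^{\,e(\sigma)}$ for an exponent $e(\sigma)$ that takes the value $0$ on the three even permutations and, together with the sign, contributes so that the three odd-permutation products differ from the even ones by the factor $-1$; after taking cube roots and accounting for the three choices of root per $\sigma$, the full set of admissible arguments collapses to the two values $\pm\pi/18$ modulo $2\pi/3$, which is exactly $\pm\pi/18$ after also using that $T_3$ is invariant under multiplication by $\omega_3$ (its cusps are at the cube roots of unity scaled by $3$). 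Hence $CT_3=\tfrac1{\sqrt3}(\alpha T_3\cup\overline\alpha T_3)$.

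The main obstacle is this last bookkeeping step: correctly computing $\arg\big((\det F_3)^{-1}\operatorname{sgn}(\sigma)\prod_j(F_3)_{j,\sigma(j)}\big)$ for all six $\sigma\in S_3$, taking all cube roots, and verifying that the rotational symmetry of $T_3$ under the cyclic group generated by $\omega_3$ reduces the a priori $18$ rotated copies of $T_3$ to just the two claimed ones centred at arguments $\pm\pi/18$. Everything else -- the phase reduction, the characteristic-polynomial argument, and the appeal to Theorems~\ref{necess} and~\ref{Hadamard} -- is essentially immediate.
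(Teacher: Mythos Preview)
Your proposal is correct and follows essentially the same route as the paper: reduce to $\mathsf{SU}(3)$ via phase invariance, invoke Theorem~\ref{necess} with the singleton benchmark $\{F_3\}$ to describe $CT_3$, and close the loop using the fact that in $\mathsf{SU}(3)$ the trace determines the characteristic polynomial and hence the conjugacy class. Two small slips in your bookkeeping sketch (which you rightly flag as the delicate step): for the even permutations one gets $\prod_j(F_3)_{j,\sigma(j)}=\omega_3^{2}$, not $\omega_3^{0}$, and for the odd ones the product is $\omega_3$, so the two values of $\operatorname{sgn}(\sigma)\prod_j(F_3)_{j,\sigma(j)}$ are $\omega_3^{2}$ and $-\omega_3$; also, the classification of order-$3$ complex Hadamard matrices is Craigen's result, whereas Haagerup treated order~$5$.
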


\begin{proof}
All complex Hadamard matrices of order $3$ are equivalent to the Fourier
matrix $F_{3}$ \cite{Cra91}, where
\[
F_{3}=%
\begin{bmatrix}
1 & 1 & 1\\
1 & \omega_{3} & \omega_{3}^{2}\\
1 & \omega_{3}^{2} & \omega_{3}%
\end{bmatrix}
\]
and $\omega_{3}:=\exp(2\pi i/3)$. Applying Theorem \ref{necess} and using $\det
F_{3}=-3\sqrt{3}i$, we obtain two possible scaling factors: $\alpha
_{F,\operatorname{id}}^{3}=-\omega_{3}^{2}/(3\sqrt{3}i)=(\overline{\alpha
}/\sqrt{3})^{3}$ and $\alpha_{F,\sigma}^{3}=\omega_{3}/(3\sqrt{3}%
i)=(\alpha/\sqrt{3})^{3}$, with $\sigma\in S_{3}$ defined by $\sigma\left(
1\right)  =1$, $\sigma\left(  2\right)  =3$ and $\sigma\left(  3\right)  =2$,
which implies $CT_{3}=\frac{1}{\sqrt{3}}\left(  \alpha T_{3}\cup
\overline{\alpha}T_{3}\right)  $. Now, the assertion follows from the fact
that the spectrum of $U\in\mathsf{SU}(3)$ is fully defined by its trace.
\end{proof}

Next, we use this result to estimate the volume of the set of chaotic
unitaries in dimension~$3$. First, observe that
\[
m(C_{3})=\mu(U\in\mathsf{SU}(3)\text{, }U\text{ is chaotic})\text{,}%
\]
where $\mu$ stands for the normalized Haar measure on $\mathsf{SU}(3)$. Now,
from the Weyl integration formula for $\mathsf{SU}(3)$
\cite[eq. (9)]{Kai06} and Theorem \ref{chaotic3}, we get

\begin{theorem}
\label{C3}
\[
m(C_{3})=\frac{3\sqrt{3}}{2\pi^{2}}
{\displaystyle\int\limits_{CT_{3}}}
\sqrt{4\negthinspace+\negthinspace\left(  \frac{2r}{3}\right)  ^{3}\cos3\theta\negthinspace-\negthinspace 3\left(  1+\frac{r^{2}%
}{9}\right)  ^{2}}rdrd\theta\text{.}
\]
\end{theorem}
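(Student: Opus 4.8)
The plan is to push the computation onto $\mathsf{SU}(3)$, convert chaoticity into a condition on the trace via Theorem~\ref{chaotic3}, and then integrate using the Weyl integration formula for $\mathsf{SU}(3)$. First I would record the reduction $m(C_3)=\mu(\{U\in\mathsf{SU}(3):U\text{ is chaotic}\})$: writing an arbitrary $U\in\mathsf{U}(3)$ as $e^{i\varphi}U_0$ with $U_0\in\mathsf{SU}(3)$, chaoticity of $U$ is equivalent to chaoticity of $U_0$ by the phase-multiplication invariance of $H^{\operatorname*{dyn}}$; and under the finite covering homomorphism $\mathsf{SU}(3)\times\mathsf{U}(1)\to\mathsf{U}(3)$ given by $(V,e^{i\varphi})\mapsto e^{i\varphi}V$, the normalized Haar measure on $\mathsf{U}(3)$ is the image of the product of the normalized Haar measures, so averaging $\mathbf{1}_{C_3}$ over $\mathsf{U}(3)$ collapses, after the trivial $\mathsf{U}(1)$-integration, to averaging it over $\mathsf{SU}(3)$.

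For $U\in\mathsf{SU}(3)$ one has $\det U=1$, so $\beta=1$ is an admissible cube root of $\det U$, and Theorem~\ref{chaotic3} says such a $U$ is chaotic precisely when $\operatorname{tr}U\in CT_3$. Since the characteristic polynomial of $U\in\mathsf{SU}(3)$ equals $\lambda^3-(\operatorname{tr}U)\lambda^2+\overline{(\operatorname{tr}U)}\lambda-1$, the whole spectrum, and with it the integrand $\mathbf{1}_{CT_3}(\operatorname{tr}U)$, depends only on $\operatorname{tr}U$. Hence $m(C_3)=\int_{\mathsf{SU}(3)}\mathbf{1}_{CT_3}(\operatorname{tr}U)\,d\mu(U)$ can be evaluated on the spectral side.

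I would then invoke the Weyl integration formula for $\mathsf{SU}(3)$ in trace coordinates, as recorded in \cite[eq.~(9)]{Kai06}: writing $z=\operatorname{tr}U=re^{i\theta}$, the pushforward of $\mu$ under the trace map has density $\frac{3\sqrt3}{2\pi^2}\sqrt{4+(2r/3)^3\cos3\theta-3(1+r^2/9)^2}$ with respect to the planar Lebesgue measure $r\,dr\,d\theta$ and is supported on $T_3$. Substituting $\mathbf{1}_{CT_3}$ and using $CT_3\subset T_3$ (Theorem~\ref{necess}) then gives exactly the asserted formula. The shape of this density is accounted for as follows: the Weyl weight $\prod_{1\le j<l\le 3}|e^{i\theta_j}-e^{i\theta_l}|^2$ equals $27-18r^2+8r^3\cos3\theta-r^4$ on $T_3$ (it is the modulus of the discriminant of the characteristic polynomial, vanishing on the bounding $3$-hypocycloid and positive inside), the Jacobian of the passage from the two free eigenangles $(\theta_1,\theta_2)$ to $z$ has modulus one half the square root of this same quantity, and the $S_3$ Weyl-group symmetry absorbs the factor $1/|W|=1/6$; since $27-18r^2+8r^3\cos3\theta-r^4=27\bigl(4+(2r/3)^3\cos3\theta-3(1+r^2/9)^2\bigr)$ and $\sqrt{27}=3\sqrt3$, the constant $3\sqrt3/(2\pi^2)$ drops out.

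With Theorem~\ref{chaotic3}, Theorem~\ref{necess}, and the cited $\mathsf{SU}(3)$ Weyl formula in hand, the remaining content is essentially bookkeeping, so the one point deserving real care is exactly this passage from the eigenangle integral to the trace-plane integral: establishing (or quoting from \cite{Kai06}) the Jacobian identity that makes the square root appear, namely that the modulus of the Jacobian of the trace map on the maximal torus of $\mathsf{SU}(3)$ equals $\tfrac12\sqrt{27-18r^2+8r^3\cos3\theta-r^4}$, and verifying that this radicand is nonnegative precisely on $T_3$, so that the domain of integration is genuinely $CT_3$ and no spurious mass is introduced. Everything else is substitution.
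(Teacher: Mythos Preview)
Your proposal is correct and follows the same route as the paper: reduce $m(C_3)$ to the $\mathsf{SU}(3)$ Haar measure of the chaotic matrices (using phase-multiplication invariance), rewrite chaoticity as the trace condition $\operatorname{tr}U\in CT_3$ via Theorem~\ref{chaotic3}, and then push forward the Haar measure under the trace map using the Weyl integration formula for $\mathsf{SU}(3)$ in trace coordinates from \cite[eq.~(9)]{Kai06}. The paper simply cites these two ingredients without elaboration; you additionally unpack where the density comes from (the Weyl weight equals $27-18r^2+8r^3\cos3\theta-r^4$, the Jacobian of the trace map contributes the square root, and the factor $27$ accounts for the $3\sqrt3$), which is a welcome justification but not a different argument.
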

\noindent
Evaluating the above integral numerically, we obtain $m(C_{3})\approx0.592$. It
is noteworthy that $m(C_{3})<m(C_{2})$.

Observe that Theorem \ref{necess} does not provide, however, any new
information about chaotic unitaries for $d=4$, since the one-parameter family
of benchmark Hadamard matrices
\medskip
\[
F_{4}^{(1)}(\varphi):=\left[
\begin{array}
[c]{cccc}%
1 & 1 & 1 & 1\\
1 & ie^{i\varphi} & -1 & -ie^{i\varphi}\\
1 & -1 & 1 & -1\\
1 & -ie^{i\varphi} & -1 & ie^{i\varphi}%
\end{array}
\right]  \text{,}%
\medskip
\]
where $\varphi\in\lbrack0,2\pi)$, see \cite{Cra91} and \cite{TadZyc06}, generates all possible
complex multipliers of modulus $1/2$.

On the other hand, for $d=5$ the benchmark set consists only of the Fourier
matrix%
\medskip
\[
F_{5}:=\left[
\begin{array}
[c]{ccccc}%
1 & 1 & 1 & 1 & 1\\
1 & \omega_{5} & \omega_{5}^{2} & \omega_{5}^{3} & \omega_{5}^{4}\\
1 & \omega_{5}^{2} & \omega_{5}^{4} & \omega_{5} & \omega_{5}^{3}\\
1 & \omega_{5}^{3} & \omega_{5} & \omega_{5}^{4} & \omega_{5}^{2}\\
1 & \omega_{5}^{4} & \omega_{5}^{3} & \omega_{5}^{2} & \omega_{5}%
\end{array}
\right]  \text{,}%
\medskip
\]
where $\omega_{5}:=\exp(2\pi i/5)$, see \cite{Haa96} and \cite{TadZyc06}. By direct calculation
we deduce from Theorem \ref{necess} a simple necessary condition for
$U\in\mathsf{U}(5)$ to be chaotic.

\begin{proposition}
Let $U\in\mathsf{U}(5)$ and $\beta^{5}=\det U$. If $U$ is chaotic, then
\[
\frac{1}{\beta}\operatorname{tr}U\in CT_{5}=\frac{1}{\sqrt{5}}{
{\textstyle\bigcup}
}\left\{  \alpha T_{5}:\alpha\in A\right\}  \text{,}%
\]
where $A:=\{1,-1,e^{\frac{\pi}{25}i},e^{-\frac{\pi}{25}i},e^{\frac{2\pi}{25}%
i},e^{-\frac{2\pi}{25}i}\}$, see Fig. 4.
\end{proposition}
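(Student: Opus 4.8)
The plan is to apply Theorem~\ref{necess} directly to the case $d=5$, using the fact, cited just before the statement, that the benchmark set $\mathsf{B}\subset\mathsf{H}_5$ consists of the single Fourier matrix $F_5$. So the left-hand union in \eqref{hyper} reduces to $CT_5=\bigcup\{\alpha_{F_5,\sigma}T_5:\sigma\in S_5\}$, and the whole task is to compute the set of possible spiral-similarity multipliers $\alpha_{F_5,\sigma}$ as $\sigma$ ranges over the $120$ permutations of $\{1,\ldots,5\}$. By definition, $\alpha_{F_5,\sigma}$ is a fifth root of $(\det F_5)^{-1}\operatorname{sgn}(\sigma)\prod_{j=1}^5 (F_5)_{j,\sigma(j)}$, and $|\alpha_{F_5,\sigma}|=1/\sqrt5$, so only the argument is at stake.

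First I would record the two ingredients. On the one hand, the entries of $F_5$ are all powers of $\omega_5=\exp(2\pi i/5)$, with $(F_5)_{j,l}=\omega_5^{(j-1)(l-1)}$; hence $\prod_{j=1}^5 (F_5)_{j,\sigma(j)}=\omega_5^{\,e(\sigma)}$ where $e(\sigma)=\sum_{j=1}^5 (j-1)(\sigma(j)-1)\pmod 5$. On the other hand, $\det F_5$ is a fixed known Gauss-sum-type quantity: for the $5\times5$ Fourier matrix $\det F_5=5^{5/2}\cdot\zeta$ for an explicit root of unity $\zeta$ (which one reads off from the standard evaluation of the Vandermonde/Gauss sum, or simply from $F_5^2$ being a permutation matrix times $5$). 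So $(\det F_5)^{-1}\prod_j(F_5)_{j,\sigma(j)}=5^{-5/2}\bar\zeta\,\omega_5^{e(\sigma)}$, and multiplying by $\operatorname{sgn}(\sigma)=\pm1=e^{0}$ or $e^{\pi i}$ toggles an overall sign. Taking a fifth root turns the factor $\omega_5^{e(\sigma)}$ into a fifth root of unity (hence contributing nothing new modulo the existing cusp structure of $T_5$, which already has $5$-fold symmetry under multiplication by $\omega_5$), turns $5^{-5/2}$ into $5^{-1/2}=1/\sqrt5$, and turns the sign $\operatorname{sgn}(\sigma)$ together with $\bar\zeta$ into the genuinely new angular data. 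Concretely one finds the multiplier set is $\{1,-1,e^{\pi i/25},e^{-\pi i/25},e^{2\pi i/25},e^{-2\pi i/25}\}\cdot(1/\sqrt5)$ up to the $5$-fold rotational symmetry already built into $T_5$; this is exactly the claimed set $A$.

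The key steps, in order: (1) compute $\det F_5$ explicitly; (2) observe that $\prod_{j=1}^5 (F_5)_{j,\sigma(j)}$ is always a fifth root of unity times a fixed constant, and that $T_5$ is invariant under multiplication by $\omega_5$ (since its bounding $5$-hypocycloid has cusps at the fifth roots of unity scaled by $5$), so these fifth-root factors produce no new regions; (3) split $S_5$ into even and odd permutations and track how $\operatorname{sgn}(\sigma)$ combines with $\bar\zeta^{1/5}$; (4) enumerate the distinct values of $\arg\alpha_{F_5,\sigma}\pmod{2\pi/5}$ — here one checks that exactly the six values $0,\pi,\pm\pi/25,\pm2\pi/25$ (mod $2\pi/5$) occur, which requires finding representative permutations realizing each and verifying no others arise; (5) assemble $CT_5=\tfrac1{\sqrt5}\bigcup\{\alpha T_5:\alpha\in A\}$ and invoke Theorem~\ref{chaotic3}-style reasoning (restricting to $\mathsf{SU}(5)$ via the phase freedom $\beta^5=\det U$) to conclude $\tfrac1\beta\operatorname{tr}U\in CT_5$ is necessary for chaoticity.

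The main obstacle is step (4): unlike $d=2,3$, where the spectrum is determined by the trace and one gets an \emph{iff}, for $d=5$ the trace no longer pins down the spectrum, so the proposition only claims necessity, and the real work is the finite but slightly fiddly bookkeeping of which residues $e(\sigma)\bmod 5$ and which signs $\operatorname{sgn}(\sigma)$ are jointly attainable — i.e.\ verifying that the map $\sigma\mapsto(\operatorname{sgn}(\sigma),\,e(\sigma)\bmod5)$ hits enough pairs to generate all six claimed angles but that, after the fifth-root extraction and modding out the $\omega_5$-symmetry of $T_5$, nothing beyond $A$ survives. This is where I would be most careful to avoid an off-by-a-root-of-unity error; everything else (the determinant evaluation, the modulus count $|\alpha|=1/\sqrt5$, the reduction to $\mathsf{SU}(5)$) is routine given Theorem~\ref{necess} and Proposition~\ref{Hadamard}. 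I would also remark, as the text does for $d=3$, that a picture (Fig.~4) showing the six rotated, shrunken $5$-hypocycloids makes the statement transparent.
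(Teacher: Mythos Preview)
Your approach is exactly the paper's: apply Theorem~\ref{necess} with $\mathsf{B}=\{F_5\}$ and compute the finitely many multipliers $\alpha_{F_5,\sigma}$, using the $\omega_5$-invariance of $T_5$ to reduce to arguments modulo $2\pi/5$. The reduction to $\mathsf{SU}(5)$ via $\beta^5=\det U$ and the ``necessity only'' remark are both correct and in the spirit of the text.

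There is, however, a genuine slip in your step~(2). You write that ``taking a fifth root turns the factor $\omega_5^{e(\sigma)}$ into a fifth root of unity (hence contributing nothing new modulo the existing cusp structure of $T_5$)''. This is false: a fifth root of $\omega_5^{e(\sigma)}$ is $\omega_{25}^{e(\sigma)}$ times a fifth root of unity, i.e.\ a \emph{25th} root of unity, and the values $\omega_{25}^{0},\omega_{25}^{1},\ldots,\omega_{25}^{4}$ are pairwise inequivalent modulo multiplication by $\omega_5$. So $e(\sigma)\bmod 5$ is precisely what generates the fine angular structure of $A$; it does not wash out. The six angles cannot come from $\operatorname{sgn}(\sigma)$ and $\bar\zeta^{1/5}$ alone (that would give at most two classes). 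You effectively self-correct in step~(4) by tracking the pair $(\operatorname{sgn}(\sigma),\,e(\sigma)\bmod 5)$, which is the right invariant: since $T_5=\omega_5 T_5$, the region $\alpha_{F_5,\sigma}T_5$ depends only on $\alpha_{F_5,\sigma}^5=(\det F_5)^{-1}\operatorname{sgn}(\sigma)\,\omega_5^{e(\sigma)}$, hence only on that pair. The actual combinatorial fact you need is that the image of $\sigma\mapsto(\operatorname{sgn}(\sigma),e(\sigma)\bmod 5)$ consists of exactly the six pairs $(+1,0),(+1,2),(+1,3),(-1,0),(-1,1),(-1,4)$ (equivalently: when $e(\sigma)\not\equiv 0$, its Legendre symbol mod~$5$ equals $-\operatorname{sgn}(\sigma)$). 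Combined with $\det F_5=-5^{5/2}$, this yields $\arg(\sqrt{5}\,\alpha_{F_5,\sigma})\in\{0,\pi/5,\pm\pi/25,\pm 2\pi/25\}\pmod{2\pi/5}$, and choosing the representatives $\{0,\pi,\pm\pi/25,\pm2\pi/25\}$ gives exactly $A$. Once you fix the reasoning in step~(2), the rest of your outline is sound and matches the paper's ``direct calculation''.
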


 For higher dimensions ($d\geq6$) Theorem \ref{necess} does not provide
 concrete information about the chaoticity of a unitary map, since the complete
 classification of complex Hadamard matrices is only available up to order
 $d=5$.

\begin{center}
\includegraphics[scale=0.4]{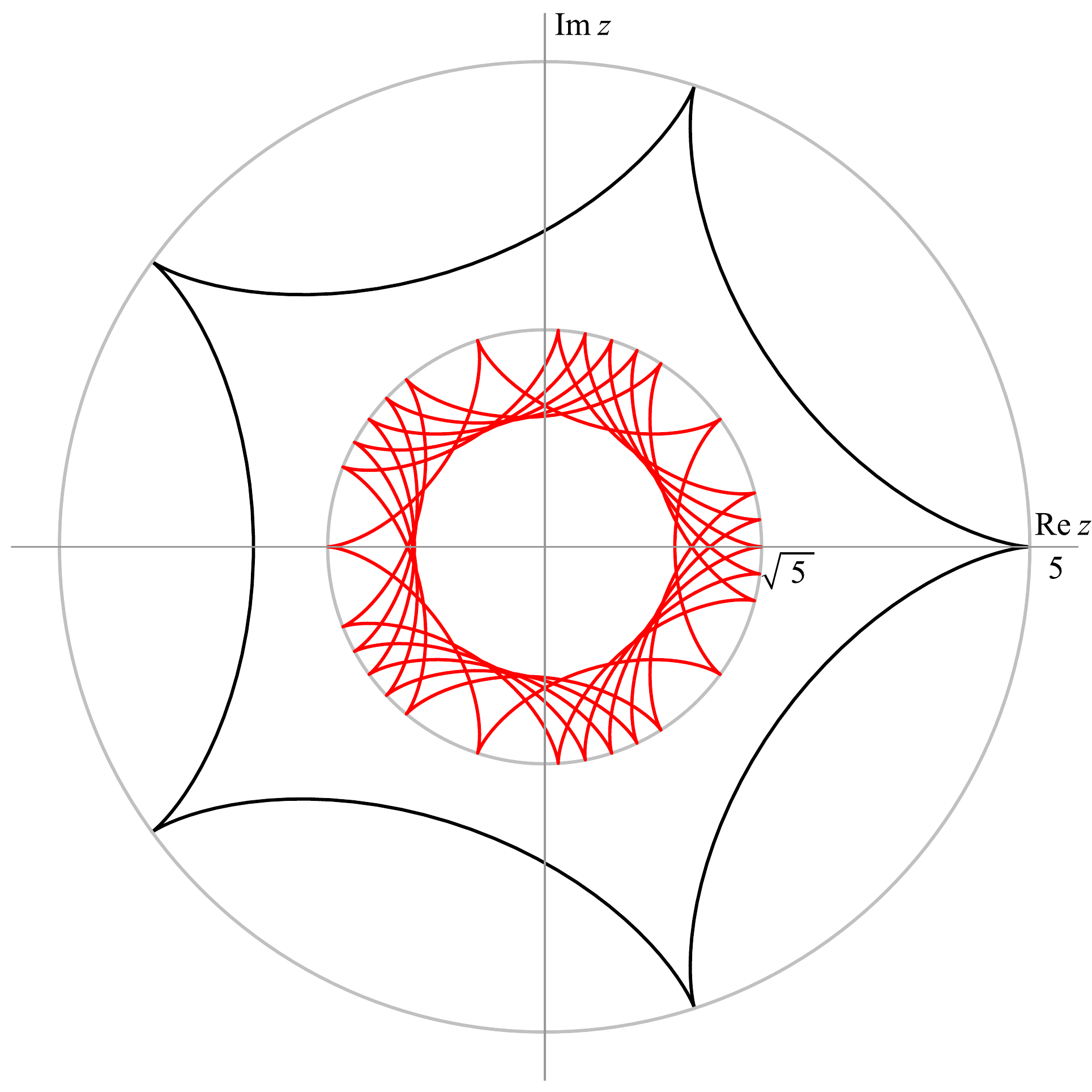}

Fig. 4. Traces of special chaotic unitaries for $d=5$ (the region bounded by
the \textit{red} curves).
\end{center}

\section{Entropy of measurement and POVM-entropy\label{POVM}}

In the closing section we would like to briefly discuss some issues related to
the POVM-dynamical entropy. We start by recalling the notion of entropy of a
POVM. By the \emph{(Shannon) entropy of the measurement} $\Pi=(\Pi
_{j})_{j=1,\ldots,k}$, where $\Pi_{j}=\left(  d/k\right)  \left|  \varphi
_{j}\right\rangle \left\langle \varphi_{j}\right|  $ for $\left|  \varphi
_{j}\right\rangle \left\langle \varphi_{j}\right|  \in\mathcal{P}\left(
\mathbb{C}^{d}\right)  $ ($j=1,\ldots,k$), we mean the function $H(\cdot
,\Pi)\colon\mathcal{S}\left(  \mathbb{C}^{d}\right)  \rightarrow\mathbb{R}$
defined by
\begin{align*}
H(\rho,\Pi)  &  :=\sum_{j=1}^{k}\eta(p_{j}(\rho,\Pi))\\
&  =\sum_{j=1}^{k}\eta(\left(  d/k\right)  \left\langle \varphi_{j}\right|
\rho\left|  \varphi_{j}\right\rangle )\\
&  =\ln\frac{k}{d}+\frac{d}{k}\sum_{j=1}^{k}\eta(\left\langle \varphi
_{j}\right|  \rho\left|  \varphi_{j}\right\rangle )
\end{align*}
for an input state $\rho\in\mathcal{S}\left(  \mathbb{C}^{d}\right)  $; see
\cite{SloSzy16, Wil16} for the history and information-theoretic
interpretation of this notion. If $\rho=\left|  \psi\right\rangle \left\langle
\psi\right|  \in\mathcal{P}\left(  \mathbb{C}^{d}\right)  $, we put $H(\left|
\psi\right\rangle ,\Pi):=H(\rho,\Pi)$. Applying (\ref{formula}), we see that
the entropy of $U\in\mathsf{U}(d)$ with respect to $\Pi$ can be expressed as
the mean entropy of $\Pi$ averaged over the output states of $\Pi$ transformed
by $U$:
\begin{equation}
H\left(  U,\Pi\right)  =\frac{1}{k}\sum_{j=1}^{k}H(U\left|  \varphi
_{j}\right\rangle ,\Pi) \label{dynentrel}%
\end{equation}
and so
\begin{equation}
H_{\operatorname*{meas}}\left(  \Pi\right)  =H\left(  \mathbb{I},\Pi\right)
=\frac{1}{k}\sum_{j=1}^{k}H(\left|  \varphi_{j}\right\rangle ,\Pi)\text{.}
\label{meaentrel}%
\end{equation}
From (\ref{dynentrel}) and (\ref{meaentrel}) we obtain
\begin{equation}
H_{\operatorname*{dyn}}\left(  U,\Pi\right)  =\frac{1}{k}\sum_{j=1}%
^{k}[H(U\left|  \varphi_{j}\right\rangle ,\Pi)-H(\left|  \varphi
_{j}\right\rangle ,\Pi)]\text{.} \label{dynfor}%
\end{equation}

For PVMs we get $H(\mathbb{I},\Pi)=0\leq H(U,\Pi)=H_{\operatorname*{dyn}%
}\left(  U,\Pi\right)  $. Surprisingly, in the general case we can find
situations where intertwining a POVM-measurement with some (or even any)
unitary operator can produce smaller entropy than that generated by the
measurement itself.

To illustrate this phenomenon, consider a SIC-POVM $\Pi=(\Pi_{j})_{j=1,\ldots,d^{2}}$, 
i.e., a rank-1 POVM satisfying the condition
$\operatorname*{tr}(\Pi_{j}\Pi_{l})=1/(d^{2}(d+1))$ for $j,l=1,\ldots,d^{2}$,
$j\neq l$. Then, from (\ref{dynfor}) and \cite{Szy16}, we get
\[
H(U,\Pi)\leq H(\mathbb{I},\Pi)=\frac{d-1}{d}\ln(d+1)+\ln d\text{.}%
\]
We also have (see \cite{Dal15,SzySlo16}) the following bound
\[
\ln\frac{(d+1)d}{2}\leq H(U,\Pi)\text{,}%
\]
which is known to be actually attained for the `tetrahedral' SIC-POVM in
dimension $2$ \cite{SloSzy16}, for all SIC-POVMs in dimension $3$
\cite{Szy14} as well as for the Hoggar SIC-POVM in dimension $8$ \cite{SzySlo16}.
Consequently, for every $U\in\mathsf{U}(d)$ we get
\begin{equation}
-\ln2+\frac{\ln(d+1)}{d}\leq H_{\operatorname*{dyn}}\left(  U,\Pi\right)
\leq0\text{.} \label{entsic}%
\end{equation}
Thus, from this point of view, SIC-POVMs and PVMs lie on the opposite ends of
the spectrum. It seems that the interplay between the two kinds of randomness,
one coming from the measurement and one associated with unitary evolution,
makes the study of the POVM-dynamical entropy particularly difficult.

\section{Conclusions}

In the present paper we solve some problems concerning chaotic unitaries and
PVM-dynamical entropy; however, many questions remain unanswered. We get
several sufficient and/or necessary conditions for a matrix to maximize the
PVM-dynamical entropy (Proposition \ref{Hadamard}, Corollary \ref{trace},
Theorem \ref{necess}), but only for qubits (Propositions \ref{entdyn2} and
\ref{chaotic2}) and qutrits (Theorem \ref{chaotic3}) we can fully describe the
set of chaotic unitaries. The problem of characterising this property in
higher dimensions, starting from ququads, remains open. The fact that the
probability of finding a chaotic matrix among unitaries is smaller for qutrits
(Theorem \ref{C3}) than for qubits (Theorem \ref{C2}) suggests the conjecture
that this probability decreases, possibly to zero, when the dimension of the
Hilbert space grows to infinity. This contrasts with the result of Theorem
\ref{meaent} that the mean PVM-dynamical entropy increases logarithmically
with the dimension and is, in fact, almost as large as possible.

On the other hand, extending the definition of the dynamical entropy to other
classes of measurements opens up a number of natural questions for further
analysis. Moving on to a broader class of POVMs, we are faced, especially in
the case of SIC-POVMs (eq. (\ref{entsic})), with the paradoxical fact that a
unitary dynamics suitably combined with a measurement can decrease the randomness,
which provides an example of a phenomenon with no classical counterpart. It is
also not clear whether the inequality between the PVM-dynamical entropy and
the POVM-dynamical entropy can be sharp.

Leaving the realm of rank-1 operators, we encounter an even more interesting
situation both in the case of PVMs and that of POVMs, first described by one of us (W.S.)
in the more general setting of operational approach to (quantum) dynamics
and measurement process \cite{Slo03}, and then by Wiesner and co-authors in
a series of papers \cite{CruWie08a,CruWie08,CruWie10,Wie10,Monetal11}. In this case
the measurement process together with the unitary dynamics still produces a Markov
chain in the space of states, but the accompanying process generated in the
space of the measurement outcomes does not have to be Markovian (it was first
noted in \cite{BecGra92}), which makes computing the dynamical entropy more challenging.
A detailed discussion of this situation is postponed to a separate publication.

Finally, a natural direction for further research is to study the
semiclassical limit of the dynamical entropies defined here, see also
\cite{SloZyc98}.

\section*{Acknowledgments}
We are thankful to Robert Craigen, S\l awomir Cynk, Zbigniew Pucha\l a, Anna Szymusiak,
and Karol \.{Z}yczkowski for helpful remarks. Financial support by the Polish
National Science Center under Project \mbox{No. DEC-2015/18/A/ST2/00274} is
gratefully acknowledged.

\begin{IEEEbiographynophoto}{Wojciech S\l omczy\'{n}ski}
received all his degrees: the M.Sc. (1984), the Ph.D. (1991), and the habilitation (2004) in mathematics from the Jagiellonian University, Krak\'{o}w, Poland. He is currently an adjunct professor in the Institute of Mathematics and the chairman of the academic board of the Center for Quantitative Research in Political Science at the Jagiellonian University. His research interests include dynamical systems (in particular: chaos, entropy, and fractals), as well as application of mathematics in quantum information and social choice theories. Together with a physicist Karol \.{Z}yczkowski, he proposed in 2004 an alternative voting system for the Council of the European Union, known as the ‘Jagiellonian Compromise’, and in 2011 was a member of a group of mathematicians and social scientists that prepared a new apportionment scheme for the European Parliament called the ‘Cambridge Compromise’. His recent research concerns quantum dynamical entropy, as well as the geometric configurations in the quantum state space extremizing the entropy of quantum measurements.
\end{IEEEbiographynophoto}
\begin{IEEEbiographynophoto}{Anna Szczepanek}
received the M.Sc. degrees in financial mathematics and in applied mathematics, in 2013 and 2014, respectively, both from the Jagiellonian University, Krak\'{o}w, Poland. She is currently a PhD candidate in the Department of Mathematics and Computer Science, under the supervision of Wojciech S\l omczy\'{n}ski, and a research assistant in the Institute of Mathematics. Her main research interests include quantum information theory, Markov processes, and social networks.
\end{IEEEbiographynophoto}

\vspace{5cm}


\begin{thebibliography}{99}
\bibitem{AccOhyWat97}Accardi, L., Ohya, M., Watanabe, N., Dynamical entropy
through quantum Markov chains. Open Syst. Inf. Dyn. \textbf{4}, 71-87 (1997).

\bibitem{Ail16}Ailon, N., An $\Omega((n\log n)/R)$ lower bound for Fourier
transform computation in the $R$-well conditioned model. ACM Trans. Comput.
Theory \textbf{8}, 4 (2016).

\bibitem{AliFan01}Alicki, R., Fannes, M., Quantum Dynamical System. Oxford UP,
Oxford, 2001.

\bibitem{AttPel11}Attal, S., Pellegrini, C., Return to equilibrium for some
stochastic Schr\"{o}dinger equations. In Halidias, N. (ed.), Stochastic
Differential Equations. Nova Publisher Book, New York, 2012, pp. 1-34.

\bibitem{Ban12}Banica, T., Quantum permutations, Hadamard matrices, and the
search for matrix models. Banach Center Publ. \textbf{98}, 11-42 (2012).

\bibitem{BecGra92}Beck, C., Graudenz, D., Symbolic dynamics of successive
quantum-mechanical measurements. Phys. Rev. A \textbf{46}, 6265-6276 (1992).

\bibitem{Ben03}Benatti, F., Classical and quantum entropies: information and
dynamics. In Greven, A., Keller, G., Warnecke, G. (eds.), Entropy. Princeton
University Press, Princeton, 2003, pp. 279-297.

\bibitem{Benetal98}Benatti, F., Hudetz, T., Knauf, A., Quantum chaos and
dynamical entropy. Comm. Math. Phys. \textbf{198}, 607-688 (1998).

\bibitem{BenZyc06}Bengtsson, I., \.{Z}yczkowski, K., Geometry of Quantum
States: An Introduction to Quantum Entanglement. Cambridge University Press,
Cambridge, 2006.

\bibitem{Benetal17}Benoist, T., Fraas, M., Pautrat, Y., Pellegrini, C.,
Invariant measure for quantum trajectories, preprint, arXiv:1703.10773 [math.PR].

\bibitem{Bra01}Bradley, DM., Representations of Catalan's Constant, 2001, at:
http://math.umemat.maine.edu/\symbol{126}bradley/papers/c1.ps.

\bibitem{Cap05}Cappellini, V., Quantum Dynamical Entropies and Complexity in
Dynamical Systems. PhD thesis, University of Trieste, 2005.

\bibitem{Chaetal05}Charzy\'{n}ski, S., Kijowski, J., Rudolph, G., Schmidt, M., On
the stratified classical configuration space of lattice QCD. J. Geom. Phys.
\textbf{55,} 137-178 (2005).

\bibitem{Conetal87}Connes, A., Narnhofer, H., Thirring, W., Dynamical entropy
of $C^{\ast}$ algebras and von Neumann algebras. Comm. Math. Phys.
\textbf{112}, 691-719 (1987).

\bibitem{Cra91}Craigen, R., Equivalence classes of inverse orthogonal and unit Hadamard matrices.
Bull. Austral. Math. Soc. \textbf{44}, 109-115 (1991).

\bibitem{CruFel03}Crutchfield, JP., Feldman, DP., Regularities unseen, randomness observed:
Levels of entropy convergence. Chaos \textbf{13}, 25-54 (2003).

\bibitem{CruWie08a}Crutchfield, JP., Wiesner, K., Computation in finitary
stochastic and quantum processes. Physica D \textbf{237}, 1173-1195 (2008).

\bibitem{CruWie08}Crutchfield, JP., Wiesner, K., Intrinsic quantum
computation. Phys. Lett. A \textbf{372}, 375-380 (2008).

\bibitem{CruWie10}Crutchfield, JP., Wiesner, K., Computation in sofic quantum
dynamical systems. Nat. Comput. \textbf{9}, 317-327 (2010).

\bibitem{Dal15}Dall'Arno, M., Hierarchy of bounds on accessible information
and informational power. Phys. Rev. A \textbf{92}, 012328 (2015).

\bibitem{DavLew70}Davies, EB., Lewis, J., An operational approach to quantum
probability. Comm. Math. Phys. \textbf{17}, 239-260 (1970).

\bibitem{DecGra07}Decker, T., Grassl, M., Implementation of generalized
measurements with minimal disturbance on a quantum computer. In Schleich, WP.,
Walther, H. (eds.), Elements of Quantum Information. Wiley-VCH, Weinheim,
2007, pp.\ 399-424.

\bibitem{Gadetal03}Gadiyar, HG., Maini, KMS., Padma, R., Sharatchandra, HS.,
Entropy and Hadamard matrices. J. Phys. A \textbf{36}, L109-L112, (2003).

\bibitem{Gre11}Grey, RM., Entropy and Information Theory. Springer, New York, 2011.

\bibitem{Haa96} Haagerup, U., Orthogonal maximal abelian-subalgebras of the
$n \times n$ matrices and cyclic $n$-roots. In Doplicher, S. et al. (eds.),
Operator Algebras and Quantum Field Theory. International Press,
Cambridge, MA 1997, pp. 296-322.

\bibitem{HeiZim11}Heinosaari, T., Ziman, M., The Mathematical Language of
Quantum Theory: From Uncertainty to Entanglement. Cambridge UP, Cambridge, 2011.

\bibitem{Jon90}Jones, KRW., Entropy of random quantum states. J. Phys. A
\textbf{23}, L1247-L1251 (1990).

\bibitem{Jon91}Jones, KRW., Riemann-Liouville fractional integration and
reduced distributions on hyperspheres. J. Phys. A \textbf{24}, 1237-1244 (1991).

\bibitem{Kai06}Kaiser, N., Mean eigenvalues for simple, simply connected,
compact Lie groups. J. Phys. A \textbf{39}, 15287 (2006).

\bibitem{Knoetal99}Knockaert, L., De Backer, B., De Zutter, D., SVD
compression, unitary transforms, and computational complexity. IEEE Trans.
Signal Process. \textbf{47}, 2724-2729 (1999).

\bibitem{KolKon14}Koll\'{a}r, B., Koniorczyk, M., Entropy rate of message sources
driven by quantum walks. Phys. Rev. A \textbf{89}, 022338 (2014).

\bibitem{Kol58}Kolmogorov, AN., Novy\u{\i} metricheski\u{\i} invariant
tranzitivnykh dinamicheskikh sistem i avtomorfizmov prostranstv Lebega (A new
metric invariant of transitive dynamical systems and Lebesgue space
authomorphisms). Dokl. Akad. Nauk SSSR \textbf{119}, 861-864 (1958) (in Russian).

\bibitem{Kum06}K\"{u}mmerer, B., Quantum Markov processes and applications in physics.
In Schüermann, M., Franz, U. (eds.),
Quantum Independent Increment Processes II. Structure of Quantum L\'{e}vy Processes,
Classical Probability, and Physics. Lecture Notes in Mathematics 1866. Springer, Berlin,
2006, pp.\ 259-330.

\bibitem{Kwaetal97}Kwapie\'{n}, J., S\l omczy\'{n}ski, W., {\.{Z}}yczkowski,
K., Coherent states measurement entropy. J. Phys. A \textbf{30}, 3175-3200 (1997).

\bibitem{Lim10}Lim, BJ., Poisson boundaries of quantum operations and quantum
trajectories. Th\`{e}se, Universit\'{e} Rennes, 2010.

\bibitem{MaaKum06}Maassen, H., K\"{u}mmerer, B., Purification of quantum
trajectories. In Denteneer, D., den Hollander, F., Verbitskiy, E. (eds.),
Dynamics \& Stochastics. Institute of Mathematical Statistics, Beachwood,
2006, pp. 252-261.

\bibitem{Monetal11}Monras, A., Beige, A., Wiesner, K., Hidden Quantum Markov
Models and non-adaptive read-out of many-body states. Appl. Math. Comput. Sci.
\textbf{3}, 93-122 (2011).

\bibitem{Mor67}Morton, HR., Symmetric products of the circle. Proc. Camb.
Phil. Soc. \textbf{63}, 349-352 (1967).

\bibitem{OhyPet93}Ohya M., Petz, D., Quantum Entropy and Its Use.
Springer-Verlag, Berlin, 1993.

\bibitem{Pat12}Pathak, A., Entropy optimal orthogonal matrices.
MSc thesis, Wright State University, 2012.

\bibitem{Pec82}Pechukas, P., Kolmogorov entropy and `quantum chaos'. J. Phys.
Chem. \textbf{86}, 2239-2243 (1982).

\bibitem{Sha48}Shannon, CE., A mathematical theory of communication. Bell
System Tech. J. \textbf{27}, 379-423; 623-656 (1948).

\bibitem{ShaWea49}Shannon, CE., Weaver, W., The Mathematical Theory of
Communication. Univ. of Illinois Press, Urbana, 1949.

\bibitem{Slo03}S\l omczy\'{n}ski, W., Dynamical Entropy, Markov Operators, and
Iterated Function Systems. Wydawnictwo Uniwersytetu Jagiello\'{n}skiego,
Krak\'{o}w, 2003.

\bibitem{SloSzy16}S\l omczy\'{n}ski, W., Szymusiak, A., Highly symmetric POVMs
and their informational power. Quantum Inf. Process. \textbf{15}, 565-606 (2016).

\bibitem{SloZyc94}S\l omczy\'{n}ski, W., {\.{Z}}yczkowski, K., Quantum chaos,
an entropy approach. J. Math. Phys. \textbf{35}, 5674-5700 (1994); Erratum.
J.\ Math.\ Phys.\ \textbf{36}, 5201 (1995).

\bibitem{SloZyc98}S\l omczy\'{n}ski, W., {\.{Z}}yczkowski, K., Mean dynamical
entropy of quantum maps on the sphere diverges in the semiclassical limit.
Phys. Rev. Lett. \textbf{80}, 1880-1883 (1998).

\bibitem{Sri78}Srinivas, MD., Quantum generalization of Kolmogorov entropy. J.
Math. Phys. \textbf{19}, 1952-1961 (1978).

\bibitem{Szy14}Szymusiak, A., Maximally informative ensembles for SIC-POVMs in
dimension 3. J. Phys. A \textbf{47}, 445301 (2014).

\bibitem{Szy16}Szymusiak, A., Pure states that are `most quantum' with respect
to a given POVM, preprint, arXiv:1701.01139 [quant-ph].

\bibitem{SzySlo16}Szymusiak, A., S\l omczy\'{n}ski, W., Informational power of
the Hoggar symmetric informationally complete positive operator-valued
measure. Phys. Rev. A \textbf{94}, 012122 (2016).

\bibitem{TadZyc06}Tadej W., {\.{Z}}yczkowski, K., A concise guide to complex
Hadamard matrices. Open Syst. Inf. Dyn. \textbf{13}, 133-177 (2006).

\bibitem{Wey53}Weyl, H., The Classical Groups, Their Invariants and
Representations. Princeton UP, Princeton, 1953.

\bibitem{Wie10}Wiesner, K., Nature computes: Information processing in quantum
dynamical systems. Chaos \textbf{20}, 037114 (2010).

\bibitem{Wig63}Wigner, EP.,\ The problem of measurement. Amer. J. Phys.
\textbf{31}, 6-15 (1963).

\bibitem{Wil16}Wilde, MM., Quantum Information Theory. Cambridge University
Press, Cambridge, 2016.

\bibitem{Zycetal03}\.{Z}yczkowski, K., Ku\'{s}, M., S\l omczy\'{n}ski, W.,
Sommers, H-J., Random unistochastic matrices. J. Phys. A \textbf{36},
3425-3450 (2003).
\end{thebibliography}
\end{document}